\pgfplotsset{compat=newest,
	compat/bar nodes=1.8,
	every axis/.append style={
		label style={font=\Large},
		tick label style={font=\large} 
	}
}
\tikzstyle{int}=[draw, fill=black!10, minimum size=5em,thick]
\tikzstyle{init} = [pin edge={to-,thick,black}]
\newtheorem{corollary}{Corollary}
\newcommand{\orcid}[1]{\href{https://orcid.org/#1}{\includegraphics[scale=0.04]{orcid}}} 
\newcommand{\myParagraph}[1]{{\bf #1.}}
\newcommand{\journalVersion}[2]{#2} 
\newcommand*\diff{\mathop{}\!\mathrm{d}}
\newcommand{\ie}{\emph{i.e.,}\xspace}
\newcommand{\Real}[1]{ { {\mathbb R}^{#1} } }
\newcommand{\pr}[1]{\mathbb{P}\ls#1\rs}
\newcommand{\mean}[1]{\mathbb{E}\left[#1\right]}
\newcommand{\one}[1][]{\mathds{1}_{#1}}
\newcommand{\e}{\mathrm{e}}
\renewcommand{\maxof}[2]{#1 \vee #2}
\theoremstyle{plain}
\newtheorem{thm}{Theorem}
\newtheorem{lemma}{Lemma}
\theoremstyle{definition}
\newtheorem{definition}{Definition}
\newtheorem{ass}{Assumption}
\theoremstyle{remark}
\newtheorem{rem}{Remark}
\newcommand{\lr}{\left(}
\newcommand{\rr}{\right)}
\newcommand{\ls}{\left[}
\newcommand{\rs}{\right]}
\newcommand{\lb}{\left\lbrace}
\newcommand{\rb}{\right\rbrace}
\newcommand{\agents}{\mathcal{V}}
\newcommand{\leg}{\mathcal{L}}
\newcommand{\mal}{\mathcal{M}}
\newcommand{\state}[2]{x_{#1}^{#2}}
\newcommand{\stateleg}[1]{x_{#1}^{\leg}}
\newcommand{\statemal}[1]{x_{#1}^{\mal}}
\newcommand{\neigh}[2][\@empty]{\mathcal{N}%
	\ifx\@empty#1 {_{#2}} \else {_{#2}(#1)} \fi}
\newcommand{\neighaug}[2][\@empty]{\neigh[#1]{#2}\cup\{#2\}}
\newcommand{\sumneigh}[3][\@empty]{\sum_{#3\in\neigh[#1]{#2}}}
\newcommand{\sumneighaug}[3][\@empty]{\sum_{#3\in\neighaug[#1]{#2}}}
\newcommand{\lam}[1]{\lambda_{#1}}
\newcommand{\statecontrleg}[1]{\bar{x}_{#1}^{\leg}}
\newcommand{\statecontrmal}[1]{\bar{x}_{#1}^{\mal}}
\newcommand{\statelegtil}[1]{y_{#1}^{\leg}}
\newcommand{\statemaltil}[1]{y_{#1}^{\mal}}
\newcommand{\stateerr}[2]{\tilde{x}_{#2}^{#1}}
\newcommand{\statelegerr}[2]{\tilde{x}_{#2}^{#1,\leg}}
\newcommand{\statelegtruess}{x_{\text{ss}}^{\leg,*}}
\newcommand{\statemalerr}[2]{\tilde{x}_{#2}^{#1,\mal}}
\newcommand{\uleg}[1]{u_{\leg}\lr#1\rr}
\newcommand{\umal}[1]{u_{\mal}\lr#1\rr}
\newcommand{\Wleg}[1]{W_{#1}^{\leg}}
\newcommand{\Wmal}[1]{W_{#1}^{\mal}}
\newcommand{\Wlegtrue}{\overline{W}^{\leg}}
\newcommand{\Wlegerr}[2][\@empty]{\widetilde{W}%
	\ifx\@empty#1 {_{#2}^{\leg}} \else {_{#2,#1}^{\leg}} \fi}
\newcommand{\prodlam}[3]{\prod_{#1=#2}^{#3}(1-\lam{#1})}
\newcommand{\prodW}[3]{\prodlam{#1}{#2}{#3}\Wleg{#1}}
\newcommand{\prodWtrue}[3]{\prod_{#1=#2}^{#3}\Wlegtrue}
\newcommand{\prodWss}[3]{\prod_{#1=#2}^{#3}(1-\lam{#1})\Wlegtrue}
\newcommand{\prodlamW}[1]{\Pi_{#1}}
\newcommand{\prodinflam}[1]{\pi_{#1}}
\newcommand{\tf}{T_\text{f}}
\newcommand{\dmax}{d_{\text{M}}}
\newcommand{\vmin}{v_\text{m}}
\newcommand{\trust}[3]{\alpha_{#1#2}(#3)}
\newcommand{\trusthist}[3]{\beta_{#1#2}(#3)}
\newcommand{\meanleg}{E_\leg}
\newcommand{\meanmal}{E_\mal}
\newcommand{\canOmit}[1]{{\color{gray}#1}\xspace}
\newcommand{\MY}[1]{{\color{red}M: #1}} 
\newcommand{\LB}[1]{{\color{blue}L: #1}} 
\newcommand{\blue}[1]{{\color{blue}#1}}
\newcommand{\linkToPdf}[1]{\href{#1}{\blue{(pdf)}}}
\newcommand{\linkToPpt}[1]{\href{#1}{\blue{(ppt)}}}
\newcommand{\linkToCode}[1]{\href{#1}{\blue{(code)}}}
\newcommand{\linkToWeb}[1]{\href{#1}{\blue{(web)}}}
\newcommand{\linkToVideo}[1]{\href{#1}{\blue{(video)}}}
\newcommand{\linkToMedia}[1]{\href{#1}{\blue{(media)}}}
\newcommand{\award}[1]{\xspace} 
\title{
	\journalVersion{%
		The Role of Confidence for Trust-based Resilient Consensus
		}{%
		{\LARGE The Role of Confidence for Trust-based Resilient Consensus \\ (Extended Version)}
	}
}
\author{Luca~Ballotta\textonesuperior
	\xspace and~Michal~Yemini\texttwosuperior
	\thanks{This work has been partially supported
		by the Italian Ministry of Education, University and Research (MIUR) through the PRIN Project n. 2017NS9FEY
        and by the CARIPARO Foundation under Visiting Programme ``HiPeR''.
    }%
    \thanks{\textonesuperior Delft Center for Systems and Control, Delft University of Technology, 2628 CD Delft, The Netherlands
        \texttt{l.ballotta@tudelft.nl}.
    }%
    \thanks{\texttwosuperior Faculty of Engineering, Bar-Ilan University, Ramat-Gan 5290002 Israel
        \texttt{michal.yemini@biu.ac.il}.
    }
}
\begin{document}
	
	\maketitle
	

\begin{abstract}
	\boldmath
    We consider a multi-agent system where agents aim to achieve a consensus despite interactions with malicious agents that communicate misleading information.
	Physical channels supporting communication in cyberphysical systems offer attractive opportunities to detect malicious agents,
	nevertheless,
	trustworthiness indications coming from the channel are subject to uncertainty and need to be treated with this in mind.
	We propose a resilient consensus protocol that incorporates trust observations from the channel and weighs them with a parameter that accounts for how confident an agent is regarding its understanding of the legitimacy of other agents in the network,
	with no need for the initial observation window $T_0$ that has been utilized in previous works.
	Analytical and numerical results show that
	\textit{(i)} our protocol achieves a resilient consensus in the presence of malicious agents and
	\textit{(ii)} the steady-state deviation from nominal consensus can be minimized by a suitable choice of the confidence parameter that depends on the statistics of trust observations.
	\journalVersion{}{This technical report contains proof details for the conference paper~\cite{acc}.
	}
\end{abstract}

\section{Introduction}\label{sec:intro}

Consensus in multi-agent systems is an essential tool in many applications, including distributed control and multi-robot coordination.
However,
the consensus protocol is fragile to outliers and easily fails in the presence of agents that do not behave according to it --- for example in the adversarial case.

To tame \textit{malicious agents} and recover a \textit{resilient consensus} among \textit{legitimate agents},
several strategies have been proposed in the literature.
One common method to achieve this goal is the Weighted-Mean Subsequence Reduced (W-MSR) algorithm~\cite{6481629},
which has been adapted to many application domains~\cite{Usevitch20tac-resilientLeaderFollower,Shang20automatica-resilientConsensus}.
Other strategies that have been recently proposed use different rules to filter out suspicious data,
such as the similarity between two agents' states~\cite{Baras19med-trust},
or leverage enhanced network structure,
such as secured agents~\cite{Abbas18tcns-trustedNodes}. 

Recovering a resilient consensus purely based on the data exchanged among agents is in general a challenging task.
A notable limitation on the theoretical guarantees of W-MSR is that the communication graph needs to enjoy a connectivity property,
called $r$-robustness,
that ensures a pervasive information flow among legitimate agents.
Unfortunately,
ensuring a sufficiently high $r$-robustness may require dense network topologies, which cannot be verified in polynomial time with respect to (w.r.t.) the number of agents~\cite{7447011,9993257}.
Thus,
in real-world applications and especially in large networks,
W-MSR may not lead to a consensus.

In contrast to data-centered approaches,
recent works~\cite{ICRA2019_switching,Yemini22tro-resilienceConsensusTrust,9481121,CDC_dist_optimization,journal_distributed_optimization,hadjicostis2022trustworthy} have proposed to use \textit{physical} information of transmissions to boost resilience in distributed cyberphysical systems,
leveraging the fact that this source of information is independent from the exchanged data.
Cyberphysical systems are widely adopted in applications,
from robot teams to smart grids.
In such systems,
communication occurs over physical channels that can be used to extract information used to assess the validity of a transmission:
for instance,
wireless signals can be analyzed to detect manipulated messages~\cite{gil2017guaranteeing,8758381,gil2023physicality}. 

However, while using physical transmission channels as a source of information for legitimacy of received messages allows one to decouple the consensus task from the detection of potential adversaries within the network,
this information is usually uncertain~\cite{gil2017guaranteeing},
partially hindering its usefulness if this is not properly accounted for.
This calls for attention in embedding the physical trustworthiness indications into the design of a resilient consensus protocol.

In this paper,
we draw inspiration from the trust-based protocol in~\cite{Yemini22tro-resilienceConsensusTrust} and the competition-based approach in~\cite{ballottaExtended},
and propose a novel algorithm that integrates the notion of \textit{trust},
coming from the physical channel,
and the concept of \textit{confidence},
which counterbalances the uncertainty in agent classification.
This integration allows us to circumvent two limitations of the previous algorithms: 
on the one hand,
we do not need a time window $T_0>0$ of trust observations 
as in~\cite{Yemini22tro-resilienceConsensusTrust};
on the other hand,
the agents achieve an asymptotic consensus,
differently from the data-driven context in~\cite{ballottaExtended}.
Specifically,
the proposed protocol anchors the agents to their initial condition through a time-varying weight $\lam{t}$ that reflects how confident an agent is about the trustworthiness of its neighbors:
owing to the competition-based approach,
this strategy avoids the agents to be misled through misclassification of neighbors and enhances resilience in the face of both unknown malicious agents and uncertain information from the physical channel.
Moreover,
we show that the confidence parameter can be tuned to optimize performance:
analytical and numerical results indicate that $\lam{t}$ should decay according to the average time the agents need to correctly classify their neighbors.

The rest of this paper is organized as follows.
\autoref{sec:system-model} presents the system model and the problem formulation,
while~\autoref{sec:resilient-protocol} introduces the proposed resilient consensus protocol and mathematical models for trust and confidence.
Then,
\autoref{sec:performance-analysis} provides theoretical guarantees offered by the protocol,
focusing on convergence (\autoref{sec:convergence}) and asymptotic deviation from the nominal consensus (\autoref{sec:deviation}).
Finally,
\autoref{sec:simulations} presents numerical simulation results that corroborate the analysis and prove our protocol effective.

\section{Setup}

\subsection{System Model and Problem Formulation}\label{sec:system-model}

\myParagraph{Network}
We consider a multi-agent system composed of $N$ agents equipped with scalar-valued states:
we denote the state of agent $i$ at time $t$ by $\state{t}{i}\in\Real{}$,
with $i \in \agents \doteq \{1,\dots,N\}$,
and the vector with all stacked states by $\state{t}{}\in\Real{N}$.
The agents can communicate and exchange their states through a fixed communication network,
modeled as a graph $\mathcal{G} = (\agents, \mathcal{E})$.
Each element $e = (i,j) \in \mathcal{E}$ indicates communication edge between agents $i$ and $j$:
if $(i,j)\in\mathcal{E}$,
it means that agent $j$ can transmit data to agent $i$ through a direct link.

In the network,
$L$ agents truthfully follow a designated protocol (\textit{legitimate agents} $ \leg\subset\agents $)
while $M=N-L$ agents behave arbitrarily (\textit{malicious agents} $ \mal\subset\agents $), 
potentially disrupting the task executed by legitimate agents.
We set the labels of legitimate and malicious agents as $\leg = \{1,\dots,L\}$ and $\mal=\{L+1,\dots,N\}$
and denote their collective states respectively by $\stateleg{t}\in\Real{L}$ and $\statemal{t}\in\Real{M}$.
We denote by $\dmax$ the maximal (in-)degree of legitimate agents,
with $\dmax<N$.

We assume that the states $\state{t}{i}$ are bounded for every $i\in\agents$ and $t\geq0$.
In the derivation,
we use the following assumption.
\begin{ass}[State bound]\label{ass:initial-state-bound}
	It holds $\max_{i\in\agents,t\ge0}|\state{i}{t}|\le\eta$.
\end{ass}

\myParagraph{Consensus Task}
The legitimate agents aim to achieve a consensus.
The nominal consensus value is determined by their initial states $\stateleg{0}$ and by the ideal communication network without malicious agents.
Specifically,
let $\neigh{i}\in\agents$ denote the neighbors of agent $i$ in the communication network $\mathcal{G}$,
\ie $\neigh{i} \doteq \{j\in\agents : (i,j)\in\mathcal{E}\}$,
and consider the nominal matrix $\Wlegtrue\in\Real{L\times L}$ with weights defined as follows for $i,j\in\leg$:
\begin{equation}\label{eq:weights-true}
	\ls\Wlegtrue\rs_{i j}=
	\begin{cases}
		\frac{1}{\left|\neigh{i}\cap\leg\right|+1} & \text { if } j \in \neigh{i}\cap\leg, \\ 
		0 & \text { if } j \notin \neighaug{i},\\
		1-\sumneigh{j}{i} \ls\Wlegtrue\rs_{ij} & \text { if } j=i.
	\end{cases}
\end{equation}
Ideally,
the legitimate agents should disregard messages sent by malicious agents
(\ie set their weights to zero)
and run the following nominal consensus protocol starting from $\stateleg{0}$:
\begin{equation}\label{eq:consensus-protocol-nominal}\tag{NOM}
	\stateleg{t+1} = \Wlegtrue\stateleg{t}, \qquad t\ge0.
\end{equation}
Unfortunately,
the identity of malicious agents is unknown to legitimate agents,
so that these cannot implement the weights~\eqref{eq:weights-true} and the protocol~\eqref{eq:consensus-protocol-nominal}.
In the next section,
we propose a \textit{resilient consensus} protocol aimed at recovering the final outcome of~\eqref{eq:consensus-protocol-nominal} in the face of malicious agents.

\subsection{Resilient Consensus Protocol}\label{sec:resilient-protocol}

In this work,
we propose the following resilient protocol to be implemented by each legitimate agent $i\in\leg$ for $t\ge0$:
\begin{equation}\label{eq:update-rule-regular}\tag{RES}
	\state{t+1}{i} = \lam{t}\state{0}{i} + \lr1-\lam{t}\rr\sumneighaug{i}{j}w_{ij}(t)\state{t}{j}.
\end{equation}
Rule~\eqref{eq:update-rule-regular} uses two key ingredients.
The weights $ w_{ij}(t) \in[0,1]$ are computed online based on \textit{trust information} that agent $i$ collects about its neighbor $j$ overtime.
The time-varying parameter $ \lam{t} \in\ [0,1] $ accounts for how \textit{confident} agent $i$ feels about the trustworthiness of its neighbors.
In the following,
we describe these two features in detail.
We note that the parameter $\lam{t}$ is new w.r.t. to previous work~\cite{Yemini22tro-resilienceConsensusTrust} and a major objective in this work is to analytically characterize the impact of this ``confidence'' term on mitigating the effect of malicious agents when consensus protocol~\eqref{eq:update-rule-regular} starts from time $0$ (\ie no observation window as in~\cite{Yemini22tro-resilienceConsensusTrust} is present).

\myParagraph{Trust}
We are interested in the case where each transmission from agent $j$ to agent $i$ can be tagged with an observation $\trust{i}{j}{t}\in[0,1]$ of a random variable $\alpha_{ij}$.

\begin{definition}[Trust variable $\alpha_{ij}$]\label{def:alpha}
	For every $i\in\leg$ and $j\in\neigh{i}$,
	the random variable $\alpha_{ij}$ taking values in the interval $[0,1]$ represents the probability that 
	agent $j\in\neigh{i}$ is a trustworthy neighbor of agent $i$.
	We denote the expected value of $\alpha_{ij}$ by $\meanleg\doteq\mean{\alpha_{ij}} - \nicefrac{1}{2}$ for legitimate transmissions
	and by $\meanmal\doteq\mean{\alpha_{ij}} - \nicefrac{1}{2}$ for malicious ones.
	We assume the availability of observations $\trust{i}{j}{t}$ of $\alpha_{ij}$ through $t\ge0$.
\end{definition}

We refer to~\cite{AURO} for a concrete example of such an $\alpha_{ij}$ variable. 
Intuitively, a random realization $\trust{i}{j}{t}$ 
contains useful trust information if the legitimacy of the transmission can be thresholded.  
We assume that a value $\trust{i}{j}{t}>\nicefrac{1}{2}$ indicates a legitimate transmission and $\trust{i}{j}{t}<\nicefrac{1}{2}$ a malicious transmission in a stochastic sense (miscommunications are possible).  
The value $\trust{i}{j}{t}=\nicefrac{1}{2}$ means that the observation is completely ambiguous and contains no useful trust information for the transmission at time $t$.

\myParagraph{Weights}
The weights $w_{ij}(t)$ in~\eqref{eq:update-rule-regular} are chosen according to the history of trust scores $ \trust{i}{j}{t} $.
By defining the aggregate trust of communications from agent $ j $ to agent $ i $ as
\begin{equation}\label{eq:trust-history}
	\trusthist{i}{j}{t}=\sum_{s=0}^t\lr\trust{i}{j}{s} - \dfrac{1}{2}\rr, \quad i \in \mathcal{L}, j \in \neigh{i},
\end{equation}
we define the \textit{trusted neighborhood} of agent $i$ at time $ t $ as
\begin{equation}\label{eq:trusted-neighborhood}
	\neigh[t]{i}\doteq\lb j \in \neigh{i}:\trusthist{i}{j}{t} \geq 0 \rb.
\end{equation}
Then,
the weights in~\eqref{eq:update-rule-regular} are assigned online as follows:
\begin{equation}\label{eq:weights-trust}
	w_{i j}(t)= 
	\begin{cases}
		\frac{1}{\left|\neigh[t]{i}\right|+1} & \text { if } j \in \neigh[t]{i}, \\ 
		0 & \text { if } j \notin \neighaug[t]{i}, \\ 
		1-\sumneigh{j}{i} w_{ij}(t) & \text { if } j=i.
	\end{cases}
\end{equation}
The weighing rule above attempts to recover the nominal weights~\eqref{eq:weights-true} as time proceeds.
In particular,
the trusted neighborhood $\neigh[t]{i}$ is designed to reconstruct the set $\neigh{i}\cap\leg$ leveraging trust information collected by agent $i$ overtime.

\myParagraph{Confidence}
Because trust observations $\trust{i}{j}{t}$ may misclassify transmissions,
the weights computed as per~\eqref{eq:weights-trust} may not immediately recover the true weights:
in fact,
even assuming that a sufficient number of transmissions can give a clear indication about the trustworthiness of a neighbor,
a legitimate agent needs to act cautiously as long as it is unsure about the trust information collected in order to not be misled by erroneous classifications.
To this aim,
we modify the standard consensus rule by adding the parameter $\lam{t}$ in~\eqref{eq:update-rule-regular}
that anchors the legitimate agents to their initial condition and refrains them from fully relying on the neighbors' states.

Intuitively,
agent $i$ accrues knowledge about the trustworthiness of its neighbors as more trust-tagged transmissions have been received.
This intuition can in fact be formalized by upper bounding the probability of misclassifying a neighbor.
\begin{ass}[Trust observations are informative]\label{ass:trust-meaningful}
	Legitimate (malicious) transmissions are classified as legitimate (malicious) on average.
	Formally,
	$\meanleg>0$ and $\meanmal<0$.
\end{ass}
\begin{lemma}[Decaying misclassification probability~\cite{Yemini22tro-resilienceConsensusTrust}]\label{lem:misclassification-probability}
	\begin{equation}\label{eq:prob-misclassification}
		\begin{gathered}
			\pr{\trusthist{i}{j}{t} < 0} \le \e^{-2\meanleg^2(t+1)} \quad \forall i\in\leg, j\in\neigh{i}\cap\leg\\
			\pr{\trusthist{i}{j}{t} \ge 0} \le \e^{-2\meanmal^2(t+1)} \quad \forall i\in\leg, j\in\neigh{i}\cap\mal.
		\end{gathered}
	\end{equation}
\end{lemma}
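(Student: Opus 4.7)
The plan is to recognize \eqref{eq:prob-misclassification} as a standard concentration statement and prove it via Hoeffding's inequality applied to the sum $\trusthist{i}{j}{t}=\sum_{s=0}^t(\trust{i}{j}{s}-\nicefrac{1}{2})$. The key structural facts I would rely on are: (a) each summand $\trust{i}{j}{s}-\nicefrac{1}{2}$ is bounded in $[-\nicefrac{1}{2},\nicefrac{1}{2}]$, since by \autoref{def:alpha} trust observations live in $[0,1]$; (b) the observations $\{\trust{i}{j}{s}\}_{s\ge0}$ are assumed independent across time (this independence is implicit in the way $\trusthist{i}{j}{t}$ is treated throughout the paper and is standard in the trust-observation model); and (c) \autoref{ass:trust-meaningful} gives a strictly positive gap between the per-step mean and the threshold $\nicefrac{1}{2}$, with sign $\meanleg>0$ for legitimate transmissions and $\meanmal<0$ for malicious ones.

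For the first bound, fix $i\in\leg$ and $j\in\neigh{i}\cap\leg$. Since $\mean{\trust{i}{j}{s}-\nicefrac{1}{2}}=\meanleg$, I would rewrite
\[
\pr{\trusthist{i}{j}{t}<0}
=\pr{\sum_{s=0}^t\!\bigl(\trust{i}{j}{s}-\tfrac12-\meanleg\bigr)<-(t+1)\meanleg},
\]
and then apply the one-sided Hoeffding bound to the $(t+1)$ independent, zero-mean, $[-1,1]$-range centered summands. With $\sum_{s=0}^t(b_s-a_s)^2=(t+1)\cdot 1$ and deviation $(t+1)\meanleg$, the exponent evaluates to $-2(t+1)\meanleg^2$, giving the first inequality.

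For the second bound, I would perform the symmetric computation for $j\in\neigh{i}\cap\mal$: since $\meanmal<0$, the event $\trusthist{i}{j}{t}\ge0$ is an upper-tail deviation of the mean-centered sum by at least $-(t+1)\meanmal>0$, so the upper-tail Hoeffding bound yields $\exp(-2(t+1)\meanmal^2)$.

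I do not anticipate a significant obstacle here; the whole proof is a one-line invocation of Hoeffding's inequality with careful bookkeeping of the range of the summands and the sign of the mean gap. The only subtle point worth stating explicitly is the independence of $\{\trust{i}{j}{s}\}_{s\ge0}$, which must be assumed (or inherited from \autoref{def:alpha}) to justify applying Hoeffding across time.
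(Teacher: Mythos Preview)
Your proposal is correct: the bound follows directly from Hoeffding's inequality applied to the $(t+1)$ independent summands $\trust{i}{j}{s}-\nicefrac{1}{2}\in[-\nicefrac{1}{2},\nicefrac{1}{2}]$, exactly as you outline. The paper itself does not prove this lemma but imports it from~\cite{Yemini22tro-resilienceConsensusTrust}, where the argument is precisely the Hoeffding computation you describe; so your approach matches the original source.
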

\cref{lem:misclassification-probability} implies that,
under~\cref{ass:trust-meaningful} that trust values $\trust{i}{j}{t}$ are informative, 
the legitimate agents can infer which neighbors are trustworthy with higher confidence guarantees overtime.
On the other hand,
the early iterations of the protocol have higher chance of misclassifications.
To counterbalance this fact and make updates resilient,
we design the parameter $ \lam{t} $ as \textit{decreasing} with time.
This way,
early updates are conservative and not much sensitive to misclassifications 
($ \lam{t} \lesssim 1 $ for small $ t $),
while late updates rely almost totally on the neighbors confidently classified as legitimate
($ \lam{t} \gtrsim 0 $ for large $ t $).

\myParagraph{Discussion - Trust and Confidence}
The update rule~\eqref{eq:update-rule-regular} leverages the two fundamental concepts of \emph{trust} and \emph{confidence},
which are used together in an intertwined manner.

The works~\cite{Yemini22tro-resilienceConsensusTrust,CDC_dist_optimization,journal_distributed_optimization}
show how to utilize physics-based trust observations to help a legitimate agent decide which neighbors it should rely on as it runs the protocol.
Nonetheless, 
at each step,
the agent can either trust a neighbor or not and it does not scale the weights given to trusted neighbors relatively by how confident it is on the decision. 
Furthermore, 
in the work \cite{Yemini22tro-resilienceConsensusTrust} the deviation from the nominal consensus value is strongly tied to an initial observation window $T_0$ 
where the agents do not trust any of their neighbors and only collect trust observations to  choose wisely what neighbors to trust in the first data update round.
This  length $T_0$ value is not straightforward to choose when the number of overall rounds varies and is not guaranteed in advance.
In contrast,
this work introduces the parameter $\lam{t}$ to capture the confidence that an agent has about the legitimacy of its neighbors, 
propose a softer approach to the clear-cut observation window used in~\cite{Yemini22tro-resilienceConsensusTrust} where agents do not trust one another,
and explores the role of such a confidence parameter to opportunistically tune the weights assigned to the neighbors.
In particular,
the formulation~\eqref{eq:update-rule-regular} highlights that the agent tunes the weights given to trusted neighbors scaling them by $(1-\lam{t})$.


The use of $\lam{t}$ draws inspiration from previous work~\cite{ballottaCDC22compColl,ballottaExtended}
where the Friedkin-Johnsen model~\cite{FJdynamics} is used to achieve resilient average consensus,
intended as the minimization of the mean square deviation.
Contrarily to the trust-based works mentioned above,
the latter references do not use information derived from physical transmissions but study a robust update rule within a data-based context.
The updates in~\cite{ballottaCDC22compColl,ballottaExtended} use a constant parameter $\lambda$ (interpreted as \emph{competition} among agents) that mitigates the influence of malicious agents
by forcefully anchoring the legitimate agents to the initial condition,
ruling out the possibility of getting arbitrarily close to the nominal consensus.
In this work,
we use a source of information independent of the data (because it derives from physical transmissions)
to make the competition-based rule more flexible and able to recover a consensus. 

\section{Performance Analysis}\label{sec:performance-analysis}

Let $W_t\in\Real{L\times N}$ denote the matrix with weights~\eqref{eq:weights-trust},
\ie $[W_t]_{ij} = w_{ij}(t)$,
and consider the following partition:
\begin{equation}
	W_t = \ls\begin{array}{c|c}
		\Wleg{t} &\Wmal{t}
	\end{array}\rs, \quad \Wleg{t}\in\Real{L\times L}, \quad \Wmal{t}\in\Real{L\times M}.
\end{equation}
The protocol~\eqref{eq:update-rule-regular} can be rewritten as follows:
\begin{equation}\label{eq:dynamics}
	\begin{aligned}
		\stateleg{t+1}	&= \lam{t}\stateleg{0} + (1-\lam{t})\begin{bmatrix}
			\Wleg{t} & \Wmal{t}
		\end{bmatrix}\begin{bmatrix}
			\stateleg{t}\\
			\statemal{t}
		\end{bmatrix}\\
						&= \statecontrleg{t} + \statecontrmal{t}
	\end{aligned}
\end{equation}
where we define the state contributions due to legitimate and malicious agents' \textit{inputs},
respectively as
\begin{subequations}
	\begin{align}
		&\statecontrleg{t} \doteq \prodW{k}{0}{t}\stateleg{0} + \sum_{k=0}^t \lr\prodW{s}{k+1}{t}\rr\lam{k}\stateleg{0} \label{eq:leg-state-contr-leg}\\
		&\statecontrmal{t} \doteq \sum_{k=0}^t \lr\prodW{s}{k+1}{t}\rr (1-\lam{k})\Wmal{k}\statemal{k}.\label{eq:leg-state-contr-mal}
	\end{align}
\end{subequations}
In the following,
we assume the parameter $\lam{t}$ has expression
\begin{equation}\label{eq:lambda}
	\lam{t} = c\e^{-\gamma t}, \qquad 0 < c < 1, \ \gamma > 0.
\end{equation}
We set $\lam{t}$ decreasing overtime to enable resilient updates at the beginning.
We choose~\eqref{eq:lambda} mainly to make analysis tractable.
We are mostly concerned with how the coefficient $\gamma$,
which dictates how fast $\lam{t}$ decays to zero,
affects the deviation from~\eqref{eq:consensus-protocol-nominal}.
Nonetheless,
we argue that the insights offered by our analysis apply to other choices of $\lam{t}$.
Also,
the misclassification probabilities~\eqref{eq:prob-misclassification} decay exponentially,
suggesting that~\eqref{eq:lambda} could be a good match with the trust statistics.

\journalVersion{Due to space limitations,
in the following we report only the main steps.
The detailed analysis is available at~\cite{arxiv}.}{}

\subsection{Convergence to Consensus}\label{sec:convergence}

\begin{corollary}[\protect{\!\!\cite[Proposition~1]{Yemini22tro-resilienceConsensusTrust}}]\label{cor:finite_Tf_exsistence_as}
\cref{lem:misclassification-probability} implies that there exists almost surely (a.s.) a random finite time $\tf\ge0$ such that the estimated weights $\Wleg{t}$ equal the true weights $\Wlegtrue$ for all $t\ge\tf$.
\end{corollary}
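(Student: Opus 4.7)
The plan is to reduce the event $\{\Wleg{t}=\Wlegtrue\}$ to a finite family of events on the scalar trust histories $\trusthist{i}{j}{t}$ and then invoke the first Borel--Cantelli lemma. By the construction in~\eqref{eq:weights-trust} and~\eqref{eq:weights-true}, the equality $\Wleg{t}=\Wlegtrue$ holds if and only if $\neigh[t]{i}=\neigh{i}\cap\leg$ for every $i\in\leg$. In turn, this set equality is equivalent to the joint event
\begin{equation*}
    \bigcap_{i\in\leg}\Bigl(\bigcap_{j\in\neigh{i}\cap\leg}\{\trusthist{i}{j}{t}\ge 0\}\;\cap\; \bigcap_{j\in\neigh{i}\cap\mal}\{\trusthist{i}{j}{t}<0\}\Bigr),
\end{equation*}
so I only need to show that, almost surely, each misclassification event $\{\trusthist{i}{j}{t}<0\}$ (for legitimate $j$) or $\{\trusthist{i}{j}{t}\ge 0\}$ (for malicious $j$) occurs only for finitely many $t$.

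Next I would apply the first Borel--Cantelli lemma pairwise. \cref{lem:misclassification-probability} gives $\pr{\trusthist{i}{j}{t}<0}\le\e^{-2\meanleg^2(t+1)}$ for legitimate $j$ and $\pr{\trusthist{i}{j}{t}\ge0}\le\e^{-2\meanmal^2(t+1)}$ for malicious $j$; since \cref{ass:trust-meaningful} provides $\meanleg>0$ and $\meanmal<0$, both upper bounds are summable over $t$ as geometric-type tails. Hence, for each fixed pair $(i,j)$, the corresponding misclassification event has finite probability of occurring infinitely often, and Borel--Cantelli yields an a.s.\ finite random time $\tf^{ij}$ beyond which the event does not occur.

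Finally, because $|\leg|\le N$ and each legitimate agent has at most $\dmax$ neighbors, there are only finitely many pairs $(i,j)$ with $i\in\leg$ and $j\in\neigh{i}$. Taking the maximum
\begin{equation*}
    \tf \doteq \max_{i\in\leg,\,j\in\neigh{i}} \tf^{ij}
\end{equation*}
yields a random variable that is finite almost surely, because a finite maximum of a.s.\ finite random variables is a.s.\ finite (equivalently, the countable union of the $\pr{\cdot}=0$ events ``$\tf^{ij}=\infty$'' still has measure zero). By the set-equivalence established in the first paragraph, $\Wleg{t}=\Wlegtrue$ for every $t\ge\tf$, which is the desired conclusion.

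The only subtlety worth double-checking is the measurability/definition of $\tf$ as a stopping-type object: $\tf^{ij}$ should be defined as $\sup\{t\ge 0 : \trusthist{i}{j}{t} \text{ is misclassified}\}$ (with the convention $\sup\emptyset=0$), which is a.s.\ finite precisely by Borel--Cantelli. No deeper obstacle appears; the argument is essentially a union bound over a finite index set combined with summability of the exponential tail bounds provided by \cref{lem:misclassification-probability}.
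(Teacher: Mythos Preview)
Your proof is correct. The paper does not actually prove this corollary; it simply states the result and defers to~\cite[Proposition~1]{Yemini22tro-resilienceConsensusTrust} for the argument. Your Borel--Cantelli approach is precisely the standard route used in that reference: summability of the exponential misclassification bounds from \cref{lem:misclassification-probability} implies each pairwise misclassification event occurs only finitely often a.s., and finiteness of the index set $\{(i,j):i\in\leg,\,j\in\neigh{i}\}$ lets you take a finite maximum. One minor remark: you only need the ``if'' direction of the set equivalence in your first paragraph (\ie $\neigh[t]{i}=\neigh{i}\cap\leg$ for all $i$ implies $\Wleg{t}=\Wlegtrue$), although the ``only if'' happens to hold too because matching the diagonal entry $w_{ii}(t)=\ls\Wlegtrue\rs_{ii}$ forces $|\neigh[t]{i}|=|\neigh{i}\cap\leg|$.
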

Also,
under the mild assumption that the subgraph induced by the legitimate agents is connected,
the following fact holds.
\begin{lemma}[\protect{\!\!\cite[Lemma~1]{Yemini22tro-resilienceConsensusTrust}}]\label{lem:W-primitive}
	The matrix $\Wlegtrue$ is primitive and there exists a stochastic vector $v$ such that $\lr\Wlegtrue\rr^\infty=\one v^\top$.
\end{lemma}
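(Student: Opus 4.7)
The plan is to reduce the statement to the Perron--Frobenius theorem for primitive row-stochastic matrices, since once $\Wlegtrue$ is shown to be primitive, the limit $(\Wlegtrue)^\infty = \one v^\top$ for some stochastic $v$ is a standard textbook consequence. Accordingly, the work lies in verifying three structural properties implicit in the construction~\eqref{eq:weights-true}: nonnegativity, row-stochasticity, and primitivity (irreducibility together with aperiodicity).

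First I would unpack~\eqref{eq:weights-true} to record that every off-diagonal entry is either zero or $1/(|\neigh{i}\cap\leg|+1)$, and by the very definition of the diagonal entry one has $[\Wlegtrue]_{ii} = 1 - |\neigh{i}\cap\leg|/(|\neigh{i}\cap\leg|+1) = 1/(|\neigh{i}\cap\leg|+1) > 0$. This shows at once that $\Wlegtrue$ is nonnegative, that each row sums to $1$ (so $\Wlegtrue\one=\one$, and $\Wlegtrue$ is row-stochastic), and that the diagonal is strictly positive. The directed graph encoding the sparsity pattern of $\Wlegtrue$ is then exactly the graph induced on $\leg$ with a self-loop added at every node.

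Next I would argue primitivity. Under the standing assumption that the subgraph induced by the legitimate agents is connected, the sparsity pattern of $\Wlegtrue$ is strongly connected, hence $\Wlegtrue$ is irreducible. The self-loops ensure that the greatest common divisor of cycle lengths at every vertex equals $1$, giving aperiodicity. An irreducible nonnegative matrix that is aperiodic is primitive; equivalently, some finite power $(\Wlegtrue)^k$ is entry-wise strictly positive. Applying Perron--Frobenius to the primitive row-stochastic matrix $\Wlegtrue$, the eigenvalue $1$ is simple and strictly dominates the remaining spectrum, with right eigenvector $\one$. Thus there exists a unique left eigenvector $v\ge 0$ with $\one^\top v = 1$ (a stochastic vector), and the powers $(\Wlegtrue)^t$ converge geometrically to the rank-one limit $\one v^\top$, which is precisely the claim.

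There is no genuine obstacle here: the proof is a textbook application of Perron--Frobenius once the combinatorial bookkeeping is in place. If anything deserves care, it is the translation of ``the induced subgraph on $\leg$ is connected'' into irreducibility of $\Wlegtrue$, which amounts to observing that the nonzero pattern of $\Wlegtrue$ inherits the adjacency of that subgraph augmented by self-loops, so that a positive path in the graph yields a positive entry in a matching power of $\Wlegtrue$.
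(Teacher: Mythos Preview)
Your argument is correct and is exactly the standard Perron--Frobenius route one expects here: row-stochasticity from~\eqref{eq:weights-true}, strictly positive diagonal, irreducibility from the assumed connectivity of the subgraph induced by $\leg$, and aperiodicity from the self-loops, yielding primitivity and hence convergence of powers to $\one v^\top$ with $v$ the left Perron eigenvector normalized to be stochastic.

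Note, however, that the paper does not actually supply a proof of this lemma: it is quoted verbatim as \cite[Lemma~1]{Yemini22tro-resilienceConsensusTrust} and used as an external fact. Your write-up therefore fills in precisely the argument that the cited reference contains, and there is nothing to compare against within the present paper. The only point worth tightening is to state explicitly that you are invoking the paper's standing assumption (mentioned just before the lemma) that the legitimate-agent subgraph is connected; without it, irreducibility fails and the conclusion is false.
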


Let $\maxof{a}{b} \doteq \max\{a,b\}$.
For every finite $k_0\ge0$,
it almost surely holds that
\begin{equation}\label{eq:prod-leg-influence-infty-1}
	\prodW{k}{k_0}{\infty} 	= \one v^\top \prodinflam{k_0} \prodlamW{k_0}
\end{equation}
where $v$ is the Perron eigenvector of $\Wlegtrue$ (see~\cref{lem:W-primitive}) and 
\begin{equation}\label{eq:def-prod-infty-lam}
	\prodinflam{k_0} \doteq \prodlam{k}{\maxof{k_0}{\tf}}{\infty}, \quad \prodlamW{k_0} \doteq \prodW{k}{k_0}{(\maxof{k_0}{\tf})-1}.
\end{equation}

If $\lam{t}\neq1 \;\forall t$,
it holds $\prodinflam{k_0} > 0$ if and only if $\sum_{k=\maxof{k_0}{\tf}}^{\infty}\lam{k}$ converges~\cite{infProductConvergence},
which is the case under~\eqref{eq:lambda} for every $\gamma>0$.

\myParagraph{Contribution by Legitimate Agents}
From the definition~\eqref{eq:leg-state-contr-leg}, ~\eqref{eq:prod-leg-influence-infty-1}, and \cref{cor:finite_Tf_exsistence_as} it holds almost surely  at the limit that
\begin{equation}\label{eq:leg-agents-influence-convergence}
	\begin{aligned}
		\statecontrleg{\infty}	\journalVersion{}{&= \prodW{k}{0}{\infty}\stateleg{0} + \sum_{k=0}^\infty\lr\prodW{s}{k+1}{\infty} \rr\lam{k} \stateleg{0}\\}
								&= \one v^\top \underbrace{\lr\prodinflam{0}\prodlamW{0} \stateleg{0} + \sum_{k=0}^\infty \prodinflam{k+1}\prodlamW{k+1}\lam{k}\stateleg{0}\rr}_{\doteq\statelegtil{}}.
	\end{aligned}
\end{equation}
In view of~\eqref{eq:lambda},
the coordinates of $\statelegtil{}$ are finite because so are the coordinates of $\stateleg{0}$ and
the matrices $\prodlamW{k}$ are sub-stochastic.

\myParagraph{Contribution by Malicious Agents}
From the definition~\eqref{eq:leg-state-contr-mal},~\eqref{eq:prod-leg-influence-infty-1},
and \cref{cor:finite_Tf_exsistence_as} it holds almost surely  at the limit that
\begin{equation}\label{eq:mal-agents-influence-convergence}
	\begin{aligned}
		\statecontrmal{\infty}	\journalVersion{}{&= \sum_{k=0}^\infty \lr\prodW{s}{k+1}{\infty}\rr (1-\lam{k})\Wmal{k}\statemal{k}\\}
								&= \one v^\top\underbrace{\sum_{k=0}^{\tf-1} \prodinflam{k+1} \prodlamW{k+1} (1-\lam{k})\Wmal{k}\statemal{k}}_{\doteq\statemaltil{}}
	\end{aligned}
\end{equation}
where $\statemaltil{}$ almost surely sums a finite number of vectors.

Combining~\eqref{eq:dynamics} with~\eqref{eq:leg-agents-influence-convergence}--\eqref{eq:mal-agents-influence-convergence},
we conclude that legitimate agents a.s. converge to the consensus $\stateleg{\infty} = \one v^\top(\statelegtil{}+\statemaltil{})$ where it can be shown that,
by \cref{ass:initial-state-bound}, $\|\statelegtil{}\|<\infty$ and $\|\statemaltil{}\|<\infty$ for any choice of $\lambda_k\in[0,1],\: \forall k\geq0$,
and $\statelegtil{}$ and $\statemaltil{}$ are nonzero if the sequence $\{\lam{k}\}_{k\ge0}$ is summable.

\subsection{Deviation from Nominal Consensus}\label{sec:deviation}

After assessing that the legitimate agents asymptotically achieve a consensus with probability $1$ (w.p.$1$),
we wish to evaluate the steady-state deviation from the nominal consensus value,
which is the one induced by the nominal weight matrix $\Wlegtrue$.
We quantify the deviation of agent $i\in\leg$ at time $t$ as follows:
\begin{equation}\label{eq:deviation}
	\stateerr{i}{t} \doteq \left|\state{t}{i} - \statelegtruess\right| = \left|\ls\stateleg{t} - \one\statelegtruess\rs_i\right|
\end{equation}
where $\statelegtruess \doteq v^\top\stateleg{0}$ is the nominal consensus value of legitimate agents at steady state.
In particular,
we are interested in upper-bounding the probability of the event that the deviation of legitimate agent $i$ from the nominal consensus value is greater than a threshold $\epsilon$,
\ie
\begin{equation}\label{eq:max-ss-deviation}
	\limsup_{t\rightarrow\infty}\stateerr{i}{t} > \epsilon.
\end{equation}
To this end,
in \autoref{sec:deviation-leg} and \autoref{sec:deviation-malicious} we respectively evaluate the state contributions of legitimate and malicious agents,
and then combine their deviations to bound the probability of~\eqref{eq:max-ss-deviation}.

\begin{rem}
	By virtue of the consensus reached at steady state by legitimate agents according to~\autoref{sec:convergence},
	all state trajectories have a well-defined limit (\ie the consensus value) almost surely, that is equal for all legitimate agents.
	This means that,
	in practice (w.p. $1$),
	$\limsup_{t\rightarrow\infty}\stateerr{i}{t} = \lim_{t\rightarrow\infty}\stateerr{i}{t}$.
\end{rem}

Evaluating the deviation from nominal is helpful to achieve analytical intuition that can help to design the parameter $\lam{t}$.
Intuitively,
small values of $\gamma$ in~\eqref{eq:lambda} refrain the legitimate agents from collaborating with trusted neighbors for longer time,
which should help when the trust scores $\trust{i}{j}{t}$ are rather uncertain,
while large values of $\gamma$ turn~\eqref{eq:update-rule-regular} into the standard consensus protocol after a few iterations,
and should suit cases when the true weights are quickly recovered.

\subsubsection{Legitimate Agents}\label{sec:deviation-leg}
In view of the setup in~\autoref{sec:resilient-protocol},
the only correct contribution to the state of any legitimate agent is the information coming from other legitimate agents,
which ideally leads to the true consensus value $\statelegtruess$.
Hence,
we define the deviation term due to legitimate agents as
\begin{equation}\label{eq:state-leg-error}
	\statelegerr{i}{t+1} \doteq \left|\ls \statecontrleg{t+1} - \one\statelegtruess \rs_i\right| = \left|\ls \Wlegerr{t}\stateleg{0} \rs_i\right|
\end{equation}
where
\begin{multline}\label{eq:state-leg-err-matrix}
	\!\Wlegerr{t} \!\doteq \!\prodW{k}{0}{t} + \sum_{k=0}^t \!\lr\prodW{s}{k+1}{t}\rr\!\lam{k}\\
	- \lr\Wlegtrue\rr^{\!\!\infty}.
\end{multline}
\begin{lemma}\label{lem:deviation-leg}
	The deviation from nominal consensus due to legitimate agents' contribution can be bounded as
	\begin{equation}\label{eq:leg-err-prob-bound}
		\pr{\limsup_{t\rightarrow\infty}\statelegerr{i}{t} > \epsilon} < \eta\uleg{\epsilon}, \qquad \forall i\in\leg
	\end{equation}
	where we define
	\begin{equation}\label{eq:leg-bound-prob}
		\uleg{\epsilon} \doteq \dfrac{2}{\epsilon}\lr\e^{s(\gamma)}\lr1-\lr\dfrac{1}{\dmax+1}\rr^{\mean{\tf}}\rr + 1 - \vmin\mean{\ell}\rr
	\end{equation}
	with $\vmin\doteq\min_{i\in\leg}v_i$,
	$\ell\doteq\min\{\ell_1,\ell_2\}$,
	and
	\begin{gather}
		s(\gamma)\doteq-\frac{1}{\gamma}-\frac{\ln(1-ce^{-\gamma})}{\gamma}\cdot \frac{1-ce^{-\gamma}}{ce^{-\gamma}}\label{eq:s}\\
		\begin{multlined}[t]
			\ell_1 \doteq\lr1-c\e^{-\gamma(\maxof{\tf}{1})}\rr^{\frac{1}{1-\e^{-\gamma}}} \lr c\e^{-\gamma(\maxof{(\tf-1)}{0})} \right.\\
			\left. + c\lr\dfrac{1-c\e^{-\gamma}}{\dmax+1}\rr^{\tf-1}\dfrac{1-\e^{-\gamma(\tf-1)}}{1-\e^{-\gamma}}\one[\{\tf>1\}] \rr
		\end{multlined}\label{eq:l1}\\
		\ell_2\doteq 1-\e^{s(\gamma)}.\label{eq:l2}
	\end{gather}
\end{lemma}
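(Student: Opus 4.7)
The plan is to invoke Markov's inequality. Since \autoref{sec:convergence} establishes a.s.\ convergence of the legitimate agents to a consensus, $\limsup_{t\to\infty}\statelegerr{i}{t} = \lim_{t\to\infty}\statelegerr{i}{t}\doteq\statelegerr{i}{\infty}$ w.p.~$1$, so it suffices to bound $\mean{\statelegerr{i}{\infty}}$ by $2\eta$ times the parenthetical expression in~\eqref{eq:leg-bound-prob}. The factor $\eta$ will come from $\|\stateleg{0}\|_\infty\le\eta$ (\cref{ass:initial-state-bound}), and the factor $2$ from a total-variation-style inequality introduced below.

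The main structural step collapses the random error matrix $\Wlegerr{\infty}$ by combining~\eqref{eq:prod-leg-influence-infty-1} with two identities: (i)~$\prodinflam{k+1}=\prodinflam{0}$ for every $k<\tf$ and (ii)~the telescoping sum $\sum_{k=\tf}^{\infty}\lam{k}\prod_{s=k+1}^{\infty}(1-\lam{s}) = 1-\prodinflam{0}$. These yield the factorization
\[
\Wlegerr{\infty} \;=\; \one v^\top \prodinflam{0}\lr A_{\tf-1}-I\rr, \qquad A_{\tf-1} \doteq \prodlamW{0}+\sum_{k=0}^{\tf-1}\lam{k}\prodlamW{k+1},
\]
so that $\statelegerr{i}{\infty} = \prodinflam{0}\,|v^\top(A_{\tf-1}-I)\stateleg{0}|\le\eta\,\prodinflam{0}\,\|v^\top(A_{\tf-1}-I)\|_1$. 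Here $A_{\tf-1}$ is the sub-stochastic matrix generated by the linear recursion $A_t=(1-\lam{t})\Wleg{t}A_{t-1}+\lam{t}I$, $A_{-1}=I$, which encodes both the random pre-$\tf$ weights and the anchoring of~\eqref{eq:update-rule-regular} to $\stateleg{0}$.

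I would then bound $\prodinflam{0}\|v^\top(A_{\tf-1}-I)\|_1$ by two complementary estimates that map to the two summands of $\epsilon\uleg{\epsilon}/2$. For $\e^{s(\gamma)}(1-(1/(\dmax+1))^{\mean{\tf}})$, I combine: the diagonal lower bound $[A_{\tf-1}]_{ii}\ge(\dmax+1)^{-\tf}\prod_{k=0}^{\tf-1}(1-\lam{k})$, obtained from the recursion and the minimum self-weight $1/(\dmax+1)$ in~\eqref{eq:weights-trust}; the total-variation inequality $\|v^\top A-v^\top\|_1\le 2(1-v^\top\mathrm{diag}(A))$, valid whenever $v^\top A$ is sub-stochastic and $v^\top$ stochastic; the identity $\prodinflam{0}\prod_{k=0}^{\tf-1}(1-\lam{k})=\prod_{k=0}^\infty(1-\lam{k})$ to remove $\tf$ from one of the factors; Jensen's inequality on the convex map $\tf\mapsto(\dmax+1)^{-\tf}$ to turn the exponent into $\mean{\tf}$; and the integral estimate $\prod_{k=0}^\infty(1-\lam{k})\le\e^{s(\gamma)}$, with $s(\gamma)=\tfrac{1}{\gamma a}\int_0^a\ln(1-u)\diff u$ and $a=ce^{-\gamma}$, which is the key technical lemma underlying the formula. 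For the complementary summand $1-\vmin\mean{\ell}$, a finer accounting that retains the full Perron weight $v$ rather than collapsing it to its minimum $\vmin$ only on the diagonal introduces $\vmin$ and reduces the problem to lower-bounding a pivotal ``consensus-overlap'' scalar. Two independent bounds are available: $\ell_1$ from the chord inequality $\ln(1-x)\ge(x/a)\ln(1-a)$ on $[0,a]$ applied termwise (giving $\prodinflam{0}\ge(1-ce^{-\gamma\max(\tf,1)})^{1/(1-e^{-\gamma})}$, with the remaining bracketed terms in \eqref{eq:l1} accounting for the geometric contribution of the pre-$\tf$ $\lam{k}$-resets), and $\ell_2=1-\e^{s(\gamma)}$ from the same asymptotic integral; choosing $\ell=\min\{\ell_1,\ell_2\}$ selects the tighter of the two.

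The hardest part will be the coherent use of $s(\gamma)$: the same integral identity must serve once as an upper bound on the relevant infinite product (first summand) and once, through $\ell_2$, as a lower bound on its complement (second summand), and the two roles must combine consistently when the halves are added. A second delicate point is the randomness of $\tf$, which appears simultaneously inside exponents, products, and summation limits; this is managed by conditioning on $\tf$ and applying Jensen wherever a convex quantity appears. A brief final check verifies that the total-variation and diagonal-lower-bound arguments remain valid when rows of $\Wleg{k}$ are strictly sub-stochastic due to malicious neighbors mistakenly classified as trusted in the pre-$\tf$ window.
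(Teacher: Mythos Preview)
Your factorization $\Wlegerr{\infty}=\one v^\top\prodinflam{0}(A_{\tf-1}-I)$ is correct and elegant, and the ingredients you list (Markov, the a.s.\ limit, Jensen, the integral estimate behind $s(\gamma)$) are the right ones. However, the proposal has a structural gap: the bound in the lemma is a \emph{sum} of two terms, and that additive structure does not come out of a single estimate on $\prodinflam{0}\|v^\top(A_{\tf-1}-I)\|_1$. You describe ``two complementary estimates'' on this one scalar, but two upper bounds on the same quantity give you the minimum, not the sum. Concretely, your first estimate reads $\prodinflam{0}\bigl(1-\min_i[A_{\tf-1}]_{ii}\bigr)\le \prodinflam{0}-\bigl(\dmax+1\bigr)^{-\tf}\prod_{k\ge0}(1-\lam{k})$; here the prefactor is $\prodinflam{0}=\prod_{k\ge\tf}(1-\lam{k})$, which is \emph{larger} than $\prod_{k\ge0}(1-\lam{k})$ and is \emph{not} bounded by $\e^{s(\gamma)}$. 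So the step ``use $\prodinflam{0}\prod_{k<\tf}(1-\lam{k})=\prod_{k\ge0}(1-\lam{k})$ to remove $\tf$'' only helps the subtracted term and does not produce the multiplicative factor $\e^{s(\gamma)}$ in front of $1-(\dmax+1)^{-\mean{\tf}}$. The description of how the second summand $1-\vmin\mean{\ell}$ emerges is also too vague to be a proof.

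The paper obtains the additive form by an explicit split $\Wlegerr{t}=\Wlegerr[1]{t}+\Wlegerr[2]{t}$ \emph{before} passing to the limit: $\Wlegerr[1]{t}$ isolates the mismatch $\prod_{k<\tf}\Wleg{k}-\prod_{k<\tf}\Wlegtrue$ scaled by the full product $\prod_{k\ge0}(1-\lam{k})$ (this is where $\e^{s(\gamma)}$ legitimately appears as a prefactor), while $\Wlegerr[2]{t}$ collects the anchoring input and is written as the difference of two sub-stochastic matrices whose diagonals are bounded below by $\vmin\ell_1$ and $\vmin\ell_2$ respectively. Each piece is then handled by the matrix-difference inequality (\cref{lem:matrix-difference} in the paper, the analogue of your TV bound), yielding the two summands separately. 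To repair your argument you would need either to perform a comparable additive decomposition of $A_{\tf-1}-I$, or to show directly that your single bound is dominated by the sum in~\eqref{eq:leg-bound-prob}; neither is done as written.
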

\begin{proof}
	Let us denote
	\begin{equation}
		\Wlegerr{t} = \Wlegerr[1]{t} + \Wlegerr[2]{t}
	\end{equation}
	where
	\begin{equation}\label{eq:leg-err-contribution-lambda}
		\Wlegerr[1]{t} \doteq \prodW{k}{0}{t} - \left(\prod_{k=0}^t(1-\lambda_k) \right)\lr\Wlegtrue\rr^\infty
	\end{equation}
	expresses the mismatch with the nominal (true) weights, and
	\begin{multline}\label{eq:leg-err-contribution-misclassification}
		\Wlegerr[2]{t} \doteq \sum_{k=0}^t \lr\prodW{s}{k+1}{t}\rr\lam{k}\\
		-\lr1-\prodlam{k}{0}{t}\rr\lr\Wlegtrue\rr^\infty
	\end{multline}
	is associated with the input $\lam{t}\stateleg{0}$ that anchors the legitimate agents to their initial condition throughout.
	Similarly to the analysis in~\autoref{sec:convergence},
	convergence to a consensus can be established a.s. for each of the deviation terms respectively associated with $\Wlegerr[1]{t}$ and $\Wlegerr[2]{t}$.

\journalVersion{}{%
		Before proceeding further,
		we note that the almost sure consensus discussed in \autoref{sec:convergence} translates into almost sure existence of the limit $\lim_{t\rightarrow\infty}\stateerr{t}{i}$.
		Moreover,
		it can be shown by the same arguments that the quantities $\Wlegerr[1]{t}\stateleg{0}$ and $\Wlegerr[2]{t}\stateleg{0}$ almost surely convergence to vectors with all finite and equal elements.
		This allows us to formally simplify~\eqref{eq:leg-err-markov-limsup} from the limit supremum to the limit.
		Indeed,
		the law of total probability yields
		\begin{equation}\label{eq:limsup-prob-partition}
			\begin{aligned}
				\pr{\limsup_{t\rightarrow\infty}\:\statelegerr{i}{t} > \epsilon}
				&=\pr{\limsup_{t\rightarrow\infty}\:\statelegerr{i}{t} > \epsilon \: \cap \: A}\\
				&+ \pr{\limsup_{t\rightarrow\infty}\:\statelegerr{i}{t} > \epsilon \: \cap \: B}\\
				&+ \pr{\limsup_{t\rightarrow\infty}\:\statelegerr{i}{t} > \epsilon \: \cap \: C},
			\end{aligned}
		\end{equation}
		where the events $A,B,C$ form a partition and are defined as
		\begin{align}
			&A : \exists \lim_{t\rightarrow\infty}\:\ls \Wlegerr{t,1}\stateleg{0} \rs_i 
   \wedge \exists  \lim_{t\rightarrow\infty}\:\ls \Wlegerr{t,2}\stateleg{0} \rs_i \wedge \tf <\infty\\
			&B : \exists \lim_{t\rightarrow\infty}\:\ls \Wlegerr{t,1}\stateleg{0} \rs_i 
   \wedge \exists  \lim_{t\rightarrow\infty}\:\ls \Wlegerr{t,2}\stateleg{0} \rs_i  \wedge \tf =\infty\\
			&C : \nexists \lim_{t\rightarrow\infty}\:\ls \Wlegerr{t,1}\stateleg{0} \rs_i \lor \nexists \lim_{t\rightarrow\infty}\:\ls \Wlegerr{t,2}\stateleg{0} \rs_i .
		\end{align}
		\cref{lem:misclassification-probability} implies that $\pr{\tf<\infty}=1$.
		Moreover,
		\autoref{sec:convergence} shows that $\tf$ finite implies that $\lim_{t\rightarrow\infty}\stateleg{t}$ exists and is a consensus.
		Therefore,
		$\pr{B} = \pr{C} = 0$,
		$\pr{A} = 1$,
		and
		\begin{equation}
			\begin{aligned}
				\pr{\limsup_{t\rightarrow\infty}\:\statelegerr{i}{t} > \epsilon}
				&=\pr{\limsup_{t\rightarrow\infty}\:\statelegerr{i}{t} > \epsilon \: \cap \: A}\\
				&=\pr{\limsup_{t\rightarrow\infty}\:\statelegerr{i}{t} > \epsilon \left|\right. A}\pr{A}\\
    &=\pr{\lim_{t\rightarrow\infty}\statelegerr{i}{t} > \epsilon\left|\right. A}.
			\end{aligned}
		\end{equation}
		Finally,
		by Markov's inequality we have
		\begin{multline}\label{eq:leg-err-markov-lim}
	\pr{\lim_{t\rightarrow\infty}\statelegerr{i}{t} > \epsilon\left|\right. A}\\
			\le \dfrac{1}{\epsilon} \mean{\lim_{t\rightarrow\infty}\left| \ls \Wlegerr[1]{t}\stateleg{0}\rs_i \right| + \lim_{t\rightarrow\infty}\left| \ls \Wlegerr[2]{t}\stateleg{0}\rs_i \right|\left|\right. A}.
		\end{multline}
	}
For simplicity of notations only, hereafter we omit the conditioning on the event $A$ from terms such as \eqref{eq:leg-err-markov-lim} whenever we utilize the existence of the limit a.s.

	We remark that the following relation can be established as well as a direct consequence of Markov's inequality.
	\begin{align}\label{eq:leg-err-markov-limsup}
		\begin{split}
			\pr{\limsup_{t\rightarrow\infty}\statelegerr{i}{t} > \epsilon} 
			&\le \dfrac{1}{\epsilon} \mean{\limsup_{t\rightarrow\infty}\left| \ls \Wlegerr[1]{t}\stateleg{0}\rs_i \right|} \\
			&+ \dfrac{1}{\epsilon} \mean{\limsup_{t\rightarrow\infty}\left| \ls \Wlegerr[2]{t}\stateleg{0}\rs_i \right|}.
		\end{split}
	\end{align}	
	To retrieve the bound~\eqref{eq:leg-err-prob-bound},
	we evaluate the expected values in~\eqref{eq:leg-err-markov-limsup}.
	To this aim,
	we use the following fact.
	\begin{lemma}[\protect{\!\!\cite[Lemma~4]{Yemini22tro-resilienceConsensusTrust}}]\label{lem:matrix-difference}
		Let $\ell > 0$ and $X,Y\in\Real{N\times N}$ be two sub-stochastic matrices such that $[X]_{ii} \ge \ell$ and $[Y]_{ii} \ge \ell$ for $i = 1,\dots,N$.
		Then,
		it holds $[|X-Y|\one]_{i} \le 2(1-\ell)$ for $i = 1,\dots,N$ where $|A|$ is the matrix with elements $|[A]_{ij}|$.
	\end{lemma}

	\myParagraph{Bound on first term, i.e. $\mean{\limsup_{t\rightarrow\infty}\left|\ls\Wlegerr[1]{t}\stateleg{0}\rs_i\right|}$}
	Let $T(t)$ be the first time instant such that the true weights are recovered through time $t$:
	\begin{equation}\label{eq:T(t)}
		T(t) \doteq \min \lb k\ge0 : \Wleg{s} = \Wlegtrue, s = k,\dots,t \rb.
	\end{equation}
	If no $k\le t$ achieves the minimum in~\eqref{eq:T(t)},
	we use the convention that $T(t) \doteq t+1$.
	By definition,
	it holds $T(t) \le \tf$ for all $t\ge0$ and $T(t)\equiv\tf$ for $t\ge\tf$ almost surely.
	Define
	\begin{equation}\label{eq:delta-wtilde1}
		\Delta\Wlegerr[1]{t}\doteq \prodlam{k}{0}{t}\left(\prod_{k=0}^{T(t)-1}W_k^{\mathcal{L}}-\prod_{k=0}^{T(t)-1}\overline{W}^{\mathcal{L}}\right).
	\end{equation}
	\journalVersion{}{Hence,
	almost surely $\lim_{t\rightarrow\infty}\Delta\Wlegerr[1]{t}=\Delta\Wlegerr[1]{\infty}$,
	where
	\begin{equation}\label{eq:delta-wtilde1-ss}
		\Delta\Wlegerr[1]{\infty} \doteq\prodlam{k}{0}{\infty}\lr\prod_{k=0}^{\tf-1}W_k^{\mathcal{L}}-\prod_{k=0}^{\tf-1}\overline{W}^{\mathcal{L}}\rr.
	\end{equation}}
	Recall that $\dmax$ denotes the maximal (in-)degree of legitimate agents,
	with $\dmax<N$.
	From \cref{ass:initial-state-bound} and \cref{lem:W-primitive,lem:matrix-difference},
	 it follows a.s. that
	\journalVersion{%
		\begin{equation}\label{eq:delta-w1-ss-bound}
			\limsup_{t\rightarrow\infty}\left|\ls\Wlegerr[1]{t}\stateleg{0}\rs_i\right|\le2\eta\prodlam{k}{0}{\infty}\lr1-\dfrac{1}{\lr\dmax+1\rr^{\tf}}\rr.
	\end{equation}}{%
		\begin{equation}\label{eq:delta-w1-ss-bound}
			\begin{aligned}
				\lim_{t\rightarrow\infty}\left|\ls\Wlegerr[1]{t}\stateleg{0}\rs_i\right| &= \left|\ls \lr\prodWtrue{k}{\tf}{\infty}\rr\Delta\Wlegerr[1]{\infty}\stateleg{0}\rs_i\right|\\
				&\stackrel{(i)}{\le} \max_{i\in\leg} \left|\ls \Delta\Wlegerr[1]{\infty}\stateleg{0}\rs_i\right|\\
				&\stackrel{(ii)}{\le} \eta \max_{i\in\leg}\ls\left|\Delta\Wlegerr[1]{\infty}\right|\one\rs_i\\
				&\stackrel{(iii)}{\le}2\eta\prodlam{k}{0}{\infty}\lr1-\dfrac{1}{\lr\dmax+1\rr^{\tf}}\rr,
			\end{aligned}
		\end{equation}
		where $(i)$ is because $\Wlegtrue$ is stochastic,
		$(ii)$ follows from~\cref{ass:initial-state-bound},
		and $(iii)$ from~\cref{lem:matrix-difference} in view of~\eqref{eq:delta-wtilde1-ss} and the facts (see~\eqref{eq:weights-true} and~\eqref{eq:weights-trust})
		\begin{equation}\label{eq:bound-wii}
			\ls\Wleg{t}\rs_{ii} \ge \dfrac{1}{\dmax+1}, \qquad \ls\Wlegtrue\rs_{ii} \ge \dfrac{1}{\dmax+1}.
	\end{equation}}
	Next,
	we find an upper bound to the infinite product in~\eqref{eq:delta-w1-ss-bound}.
	The following relationship holds:
	\begin{equation}\label{eq:bound-series}
		\begin{aligned}
			\prod_{k=0}^{\infty}\left(1-\lambda_k\right)
			\journalVersion{}{&=\prod_{k=0}^{\infty}\left(1-ce^{-\gamma k}\right)\\
				&= \exp\left(\sum_{k=0}^{\infty}\ln\left(1-ce^{-\gamma k}\right)\right)\\
				&}\leq \exp\left(\int_{k=0}^{\infty}\ln\left(1-ce^{-\gamma (k+1)}\right)\diff k\right).
		\end{aligned}
	\end{equation}
	Define the dilogarithm function $\text{Li}_2(z)\doteq\sum_{k=1}^{\infty}\frac{z^k}{k^2}$,
	then 
	\begin{equation}
		\begin{aligned}
			\int_{k=0}^{\infty}\ln\left(1-ce^{-\gamma (k+1)}\right)\diff k &= -\frac{\text{Li}_2\left(ce^{-\gamma}\right)}{\gamma}\journalVersion{}{\\
				&=-\frac{1}{\gamma}\sum_{k=1}^{\infty}\frac{c^ke^{-\gamma k}}{k^2}}.
		\end{aligned}
	\end{equation}
	Denote \[\overline{s}(x)\doteq\frac{x - x\ln(1 - x) + \ln(1 - x)}{x}.\] 
	By recalling the identity $\sum_{k=1}^{\infty}\frac{x^k}{k(k+1)}=\overline{s}(x)$ for $|x|\leq 1$,
	it follows
	\begin{equation*}\label{eq:inf-prod-upper-bound}
		-\frac{\text{Li}_2\left(ce^{-\gamma}\right)}{\gamma} \leq -\frac{1}{\gamma}\sum_{k=1}^{\infty}\frac{\lr ce^{-\gamma}\rr^k}{k(k+1)}=s(\gamma)\doteq-\frac{\overline{s}(ce^{-\gamma})}{\gamma}.
	\end{equation*}
	Finally,
	from~\eqref{eq:delta-w1-ss-bound}--\eqref{eq:bound-series} and~\eqref{eq:inf-prod-upper-bound},
	the first expectation in~\eqref{eq:leg-err-markov-limsup} can be almost surely upper bounded as follows:
	\journalVersion{
		\begin{equation}\label{eq:w1-ss-bound}
			\mean{\limsup_{t\rightarrow\infty}\left|\ls\Wlegerr[1]{t}\stateleg{0}\rs_i\right|}	\le 2\eta\e^{s(\gamma)} \lr1-\lr\dfrac{1}{\dmax+1}\rr^{\mean{\tf}}\rr.
		\end{equation}
	}{\begin{equation}\label{eq:w1-ss-bound}
			\begin{aligned}
				\mean{\lim_{t\rightarrow\infty}\left|\ls\Wlegerr[1]{t}\stateleg{0}\rs_i\right|}
				&\le 2\eta\e^{s(\gamma)} \lr1-\mean{\dfrac{1}{\lr\dmax+1\rr^{\tf}}}\rr \\
				&\le 2\eta\e^{s(\gamma)} \lr1-\lr\dfrac{1}{\dmax+1}\rr^{\mean{\tf}}\rr,
			\end{aligned}
		\end{equation}
		where the second line follows from Jensen's inequality.
	}

	\myParagraph{Bound on second term, i.e. $\mean{\limsup_{t\rightarrow\infty}\left|\ls\Wlegerr[2]{t}\stateleg{0}\rs_i\right|}$}
	We split the first matrix in $\Wlegerr[2]{t}$ as
	\begin{equation}
		\sum_{k=0}^t \lr\prodW{s}{k+1}{t}\rr\lam{k} = X_{t,1} + X_{t,2}
	\end{equation}
	where
	\begin{subequations}
		\begin{flalign}
			X_{t,1} &\doteq \sum_{k=0}^{T(t)-2} \lr\prodW{s}{k+1}{t}\rr\lam{k} \\
			X_{t,2} &\doteq \sum_{k=\maxof{(T(t)-1)}{0}}^t \lr\prodWss{s}{k+1}{t}\rr\lam{k}.
		\end{flalign}
	\end{subequations}
	\journalVersion{}{Similarly to $\Wlegerr[1]{t}$,
	a.s. it holds $\lim_{t\rightarrow\infty}\Wlegerr[2]{t}=\Wlegerr[2]{\infty}$,
	where
	\begin{flalign}
		\Wlegerr[2]{\infty} &= X_{\infty} -\lr1-\prodlam{k}{0}{\infty}\rr\one v^\top\label{eq:w2-ss}\\
		X_\infty&\doteq X_{\infty,1} + X_{\infty,2}\\
		X_{\infty,1}&= \one v^\top\prodlam{k}{\tf}{\infty}\sum_{k=0}^{\tf-2} \lr\prodW{s}{k+1}{\tf-1}\rr\lam{k}\label{eq:X1}\\
		X_{\infty,2} &= \sum_{k=\maxof{(\tf-1)}{0}}^\infty \lr\prodWss{s}{k+1}{\infty}\rr\lam{k}.\label{eq:X2}
	\end{flalign}}
	\journalVersion{Applying~\cref{ass:initial-state-bound} and~\cref{lem:W-primitive,lem:matrix-difference} yields}{It can be verified that $X_\infty$ is sub-stochastic.
		Applying~\cref{ass:initial-state-bound} and~\cref{lem:W-primitive,lem:matrix-difference} yields
	}
	\begin{equation}\label{eq:w2-ss-bound}
		\begin{aligned}
			\journalVersion{\limsup}{\lim}_{t\rightarrow\infty}\left|\ls\Wlegerr[2]{t}\stateleg{0}\rs_i\right| 
			\journalVersion{}{&\le \max_{i\in\leg} \left|\ls \Wlegerr[2]{\infty}\stateleg{0}\rs_i\right|\\
				&\le \eta \max_{i\in\leg}\ls\left|\Wlegerr[2]{\infty}\right|\one\rs_i\\
				&}\le \journalVersion{\lim_{t\rightarrow\infty}}{}2\eta (1-\ell\journalVersion{_t}{}),
		\end{aligned}
	\end{equation}
	where $\ell\journalVersion{_t}{}$ is a lower bound on the diagonal elements of each of the two matrices \journalVersion{in~\eqref{eq:leg-err-contribution-misclassification}.}{in~\eqref{eq:w2-ss}.}
	As for the second matrix,
	it holds
	\begin{equation}\label{b2}
		\ls\left(1-\prod_{k=0}^\infty(1-\lambda_k) \right)\one v^\top\rs_{ii} \ge \ell_2\vmin,
	\end{equation}
	where $\vmin$ and $\ell_2$ are defined in~\cref{lem:deviation-leg}.
	\journalVersion{We separately bound the diagonal elements of $X_{t,1}$ and $X_{t,2}$.}{We separately bound the diagonal elements of $X_{\infty,1}$ and $X_{\infty,2}$.}
	\journalVersion{It holds}{Consider the inequality
		\begin{equation}\label{eq:inf-prod-lam-inequality}
			\prodlam{s}{k}{K-1} > (1-\lam{k})^{\frac{1}{\lam{k}}\sum_{s=k}^{K-1}\lam{s}} = (1-c\e^{-\gamma k})^{\frac{1-\e^{-\gamma(K-k)}}{1-\e^{-\gamma}}}.
		\end{equation}
		From~\eqref{eq:inf-prod-lam-inequality},
		the infinite product in~\eqref{eq:X1} can be bounded as}
	\begin{equation}\label{eq:prod-lam-inequality}
		\journalVersion{\prodlam{k}{T(t)}{t} > \lr1-c\e^{-\gamma T(t)}\rr^{\frac{1-\e^{-\gamma(t+1-T(t))}}{1-\e^{-\gamma}}}.}{\prodlam{k}{\tf}{\infty} > \lr1-c\e^{-\gamma\tf}\rr^{\frac{1}{1-\e^{-\gamma}}}.}
	\end{equation}
	Consider now the \journalVersion{following}{} inequality\journalVersion{:}{in $X_{\infty,1}$. It holds}
	\begin{multline}\label{eq:delta-w2-first-bound}
		\journalVersion{\ls\sum_{k=0}^{T(t)-2} \lr\prodW{s}{k+1}{T(t)-1}\rr\lam{k}\rs_{ii}\\
			\ge c\lr\dfrac{1-c\e^{-\gamma}}{\dmax+1}\rr^{T(t)-1} \dfrac{1 - \e^{-\gamma(T(t)-1)}}{1-\e^{-\gamma}}\mathds{1}_{\{T(t)>1\}}.
		}{\ls\sum_{k=0}^{\tf-2} \lr\prodW{s}{k+1}{\tf-1}\rr\lam{k}\rs_{ii}\\
		\begin{aligned}
			&\stackrel{{(i)}}{\ge} \sum_{k=0}^{\tf-2} \lr\prod_{s=k+1}^{\tf-1}\dfrac{1-\lam{s}}{\dmax+1}\rr\lam{k}\\
			&\stackrel{{(ii)}}{\ge} \sum_{k=0}^{\tf-2} \lr\dfrac{1-\lam{k+1}}{\dmax+1}\rr^{\tf-k-1}\lam{k}\\
			&\stackrel{{(iii)}}{\ge} \lr\dfrac{1-\lam{1}}{\dmax+1}\rr^{\tf-1} \sum_{k=0}^{\tf-2} \lam{k}\\
			&\ge c\lr\dfrac{1-c\e^{-\gamma}}{\dmax+1}\rr^{\tf-1} \dfrac{1 - \e^{-\gamma(\tf-1)}}{1-\e^{-\gamma}}\mathds{1}_{\{\tf>1\}}
		\end{aligned}}
	\end{multline}
	\journalVersion{Then, the diagonal elements of $X_{t,1}$ are bounded as}{where $(i)$ follows from~\eqref{eq:bound-wii},
		and $(ii)$ and $(iii)$ because the arguments of product and summation are increasing with the respective indices.
		Additionally, the diagonal elements of $X_{\infty,1}$ are bounded as
	}
	\journalVersion{%
		\begin{multline}\label{eq:b11}
			\ls X_{t,1}\rs_{ii} \ge
			\vmin\lr1-c\e^{-\gamma T(t)}\rr^{\frac{1-\e^{-\gamma(t+1-T(t))}}{1-\e^{-\gamma}}}\\
			\cdot c\lr\dfrac{1-c\e^{-\gamma}}{\dmax+1}\rr^{T(t)-1}\dfrac{1-\e^{-\gamma(T(t)-1)}}{1-\e^{-\gamma}}\one[\{T(t)>1\}].
		\end{multline}
	}{%
	\begin{multline}\label{eq:b11}
		\ls X_{\infty,1}\rs_{ii} \ge
		\vmin\lr1-c\e^{-\gamma \tf}\rr^{\frac{1}{1-\e^{-\gamma}}}  c\lr\dfrac{1-c\e^{-\gamma}}{\dmax+1}\rr^{\tf-1}\\
		\cdot\dfrac{1-\e^{-\gamma(\tf\-1)}}{1-\e^{-\gamma}}\one[\{\tf>1\}].
	\end{multline}
	}
	We bound the diagonal elements of \journalVersion{$X_{t,2}$}{$X_{\infty,2}$} as
	\begin{equation}\label{eq:b12}
		\begin{aligned}
			\journalVersion{\ls X_{t,2}\rs_{ii}}{\ls X_{\infty,2}\rs_{ii}&\ge \ls\lr\prodWss{k}{\maxof{\tf}{1}}{\infty}\rr\lam{\maxof{(\tf-1)}{0}}\rs_{ii} \\}
			&\ge\vmin\journalVersion{%
				\begin{multlined}[t]
					\lr1-c\e^{-\gamma(\maxof{T(t)}{1})}\rr^{\frac{1-\e^{-\gamma(t+1-(\maxof{T(t)}{1}))}}{1-\e^{-\gamma}}}\\
					\cdot c\e^{-\gamma(\maxof{(T(t)-1)}{0})}
				\end{multlined}}{\prodlam{k}{\tf}{\infty} > \lr1-c\e^{-\gamma\tf}\rr^{\frac{1}{1-\e^{-\gamma}}}}
		\end{aligned}
	\end{equation}
	and 
	\begin{equation}
		\journalVersion{%
			\ls X_{t}\rs_{ii} = \ls X_{t,1}\rs_{ii} + \ls X_{t,2}\rs_{ii} \ge \ell_{t,1}\vmin}{%
			\ls X_{\infty}\rs_{ii} = \ls X_{\infty,1}\rs_{ii} + \ls X_{\infty,2}\rs_{ii} \ge \ell_{1}\vmin
		}
	\end{equation}
	where \journalVersion{$\ell_{t,1}$ is given by~\eqref{eq:b11}--\eqref{eq:b12}.
		Further, $\lim_{t\rightarrow\infty}\ell_{t,1}=\ell_1$ because $\lim_{t\rightarrow\infty}T(t)=\tf<\infty$ almost surely.}{$\ell_1$ is defined in~\eqref{eq:l1}.}
	\journalVersion{It follows $\lim_{t\rightarrow\infty}\ell_t = \vmin\min\{\ell_1,\ell_2\}$ in~\eqref{eq:w2-ss-bound}.}{The two matrices in~\eqref{eq:w2-ss} have diagonal elements lower bounded by $\ell = \min\{\ell_1,\ell_2\}$.}
	Then,
	the second expectation \journalVersion{in~\eqref{eq:leg-err-markov-limsup} can be a.s.}{in~\eqref{eq:leg-err-markov-lim} can be} bounded as
	\begin{equation}\label{eq:w2-ss-bound-exp}
		\mean{\journalVersion{\limsup}{\lim}_{t\rightarrow\infty}\left|\ls\Wlegerr[2]{t}\stateleg{0}\rs_i\right|} \le 2\eta\lr1 - \vmin\mean{\min\{\ell_1,\ell_2\}}\rr.
	\end{equation}
	Finally,
	the probability~\eqref{eq:leg-err-prob-bound} can be bounded almost surely at the limit according to\journalVersion{~\eqref{eq:leg-err-markov-limsup}}{~\eqref{eq:leg-err-markov-lim}}
	by plugging in~\eqref{eq:w1-ss-bound} and~\eqref{eq:w2-ss-bound-exp}.
\end{proof}

A few remarks are in order to understand the meaning of bound~\eqref{eq:leg-bound-prob} and how it behaves as $\gamma$ varies.
For convenience,
we recall the expression of the bound below:
\begin{equation}
	\pr{\limsup_{t\rightarrow\infty}\statelegerr{i}{t} > \epsilon} < \eta\uleg{\epsilon}, \ \uleg{\epsilon} \propto \e^{s(\gamma)} - \mean{\ell}.
\end{equation}
The behavior of the term $\uleg{\epsilon}$ is mainly affected by two functions of $\gamma$,
which are $\e^{s(\gamma)}$ and $\mean{\ell}$.

The first function,
ruled by $s(\gamma)$,
expresses the deviation due to following the protocol \eqref{eq:update-rule-regular} with the learned weights~\eqref{eq:weights-trust} rather than with the (unknown) true weights~\eqref{eq:weights-true},
and it is increasing with $\gamma$.
In words,
this suggests that setting $\gamma$ small (\ie making $\lam{t}$ decay slowly overtime)
is beneficial to performance because legitimate agents can learn the trustworthy neighbors while keeping balanced weights (thus avoiding biases caused by misclassification of legitimate neighbors) during this learning process.
This is reminiscent of the strategy in~\cite{Yemini22tro-resilienceConsensusTrust},
where the consensus starts at $T_0$ and a larger value of $T_0$ reduces the deviation term associated with data exchange among legitimate agents. 
Moreover,
the coefficient that multiplies $\e^{s(\gamma)}$ increases with $\mean{\tf}$,
so that,
for any choice of $\gamma$,
the deviation is larger for larger $\tf$.

The second function appearing in $\uleg{\epsilon}$ is proportional to the negative expectation of $\ell$ w.r.t. $\tf$ and expresses the impact of the input term $\lam{t}\state{0}{i}$ in~\eqref{eq:update-rule-regular} that anchors the legitimate agents to their initial condition.
It is not easy to analytically evaluate the minimum $\ell=\min\{\ell_1,\ell_2\}$,
in general.
Nonetheless,
the following facts hold:
\begin{enumerate}[label=(\arabic*)]
	\item the term $\ell_2$ is strictly decreasing with $\gamma$;
	\item the term $\ell_1$ is strictly increasing with $\gamma$ for $\tf\le1$,
	while for $\tf>1$ it is the product between an increasing and a decreasing function;
	\item the limits of the two terms evaluate
	\begin{subequations}
		\begin{gather}
			\ell_1^0\doteq\lim_{\gamma\rightarrow0}\ell_1 = 0\\
			\ell_2^0\doteq\lim_{\gamma\rightarrow0}\ell_2 = \vmin\\
			\ell_1^\infty\doteq\lim_{\gamma\rightarrow\infty}\ell_1 = \vmin c\one[\{\tf\le1\}]+\dfrac{\vmin c\one[\{\tf>1\}]}{\lr\dmax+1\rr^{\tf-1}}\label{eq:ell1-lim-inf}\\
			\ell_2^\infty\doteq\lim_{\gamma\rightarrow\infty}\ell_2 = c\vmin
		\end{gather}
	\end{subequations}
	and it follows
	\begin{equation}\label{eq:limits-ineq}
		\ell_1^0 < \ell_2^0, \qquad \ell_1^\infty \le \ell_2^\infty
	\end{equation}
	where the equality $\ell_1^\infty = \ell_2^\infty$ holds if and only if $\tf\le1$;
	\item from~\eqref{eq:ell1-lim-inf},
	it follows that $\ell_1^\infty$ decreases with $\tf$ and $\lim_{\tf\rightarrow\infty}\ell_1^\infty=0$.
\end{enumerate}
From the items (1)--(3) above and continuity of $\ell_1$ and $\ell_2$,
we infer the following result,
summarized as a lemma.
\begin{lemma}\label{lem:bound-deviation}
	If $\tf\le1$,
	it holds $\ell=\ell_1$.
	If $\tf>1$,
	there exist $\bar{\gamma}_1>0$ and $\bar{\gamma}_2>0$ such that 
	$\ell=\ell_1$ for $\gamma<\bar{\gamma}_1$ and $\gamma>\bar{\gamma}_2$.
\end{lemma}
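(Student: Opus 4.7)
The plan is to leverage items (1)--(3) that precede the lemma statement, together with continuity of $\ell_1$ and $\ell_2$ as functions of $\gamma\in(0,\infty)$, to compare the two candidate lower bounds in each regime of $\tf$.

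\begin{proof}
Both $\ell_1$ and $\ell_2$ are continuous functions of $\gamma>0$ for every fixed value of $\tf$.

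\emph{Case $\tf\le1$.} By item (2), $\ell_1$ is strictly increasing in $\gamma$; by item (1), $\ell_2$ is strictly decreasing. From item (3) we have $\ell_1^0=0<\vmin=\ell_2^0$ and $\ell_1^\infty=\vmin c=\ell_2^\infty$. Since $\ell_1$ approaches its (finite) supremum $\vmin c$ from below while $\ell_2$ approaches its infimum $\vmin c$ from above and both are monotonic with the same limit at $\gamma\to\infty$, one gets $\ell_1(\gamma)<\ell_2(\gamma)$ for every finite $\gamma>0$. Hence $\ell=\min\{\ell_1,\ell_2\}=\ell_1$ on the whole half-line.

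\emph{Case $\tf>1$.} Here $\ell_1$ need not be monotonic, so the claim must be localized near the endpoints of $(0,\infty)$. From item (3), $\ell_1^0=0<\vmin=\ell_2^0$; by continuity of $\ell_1$ and $\ell_2$ at $\gamma=0^+$, the strict inequality $\ell_1(\gamma)<\ell_2(\gamma)$ persists on some interval $(0,\bar\gamma_1)$ with $\bar\gamma_1>0$. Similarly, from~\eqref{eq:ell1-lim-inf} and the inequality~\eqref{eq:limits-ineq} (which is strict when $\tf>1$), we have $\ell_1^\infty=\vmin c/(\dmax+1)^{\tf-1}<\vmin c=\ell_2^\infty$, and again by continuity at $\gamma\to\infty$ there exists $\bar\gamma_2>0$ such that $\ell_1(\gamma)<\ell_2(\gamma)$ for every $\gamma>\bar\gamma_2$. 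In both intervals we conclude $\ell=\ell_1$, as claimed.
\end{proof}

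The main technical care needed is in justifying that the strict limit inequalities in item (3) indeed extend to open neighborhoods of $\gamma=0$ and $\gamma=\infty$; this is purely a continuity argument once the limits are established, and no monotonicity of $\ell_1$ on the intermediate range $[\bar\gamma_1,\bar\gamma_2]$ is claimed or needed. The lemma deliberately leaves the middle region unresolved because $\ell_1$ can cross $\ell_2$ there for large $\tf$, where $\ell_1$ is a product of an increasing and a decreasing factor in $\gamma$.
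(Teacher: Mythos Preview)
Your proof is correct and follows exactly the approach sketched in the paper, which simply states that the lemma is inferred ``from the items (1)--(3) above and continuity of $\ell_1$ and $\ell_2$'' without further elaboration. You have made explicit the monotonicity-plus-equal-limits argument for $\tf\le1$ and the strict-limit-inequality-plus-continuity argument for $\tf>1$, which is precisely what the paper leaves to the reader.
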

\cref{lem:bound-deviation} implies that,
for $\tf\le1$,
$\ell=\ell_1$ and thus the term $-\ell$ is decreasing with $\gamma$.

On the other hand,
it is analytically difficult to infer how the term $\ell_1$ (and hence $\ell$) behaves for $\tf>1$ and generic values of $\gamma$.
Nonetheless,
the fact (4) above suggests that,
as $\tf$ grows,
$\ell_1$ should have a non-monotonic behavior and admit a nontrivial maximum.
Indeed,
numerical tests show that $-\ell_1$ is minimized at a finite value of $\gamma$ for every $\tf>1$,
and that such a minimizer decreases as $\tf$ increases,
as illustrated in~\autoref{fig:ell}.
Considering that the bound~\eqref{eq:leg-bound-prob} is proportional to $-\ell$ in expectation,
this further suggests that the deviation decreases for small values of $\gamma$ and increases for large values,
with the minimum point that shifts towards $\gamma=0$ as the more likely values of $\tf$ increase.
This behavior means that, 
if the legitimate agents need time to learn which neighbors are trustworthy,
they should act more cautiously and increase the parameter $\lam{t}$,
especially at the beginning.

\begin{figure}
	\centering
	\includegraphics[width=.6\linewidth]{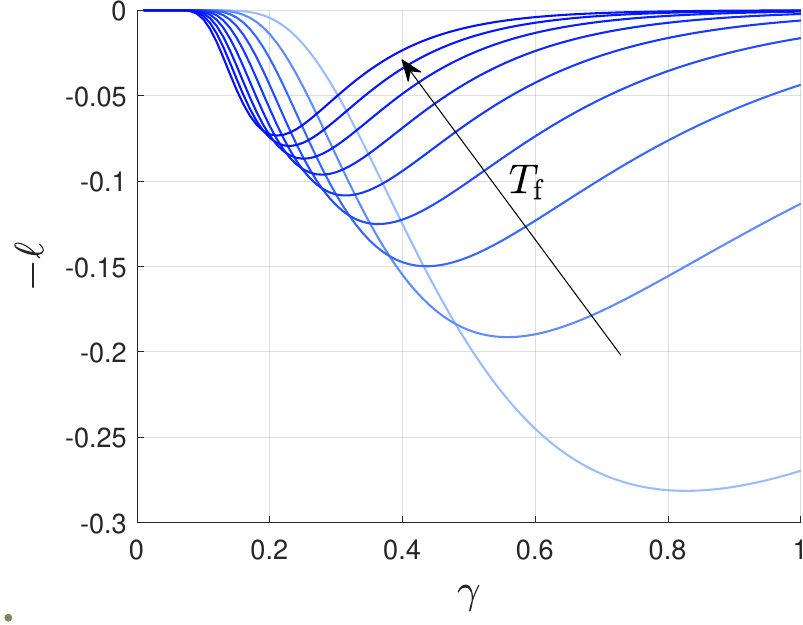}
	\caption{Profile of $-\ell$ in~\eqref{eq:leg-bound-prob} as a function of $\gamma$ with $\tf \in \{2,\dots,10\}$ (the arrow indicates how the curve varies as $\tf$ grows).
		Recall that the bound on (probability of) deviation due to legitimate agents is proportional to $-\mean{\ell}$.
	}
	\label{fig:ell}
\end{figure}

To summarize,
the bound~\eqref{eq:leg-bound-prob} on the deviation due to misclassifications of legitimate agents is numerically seen to be quasi-convex with $\gamma$,
with the minimum point that approaches $\gamma=0$ as $\tf$ increases,
\ie according to how difficult learning the true weights $\Wlegtrue$ is.

\subsubsection{Malicious Agents}\label{sec:deviation-malicious}

Because malicious agents cannot be trusted,
the protocol~\eqref{eq:update-rule-regular} should ideally annihilate their contribution to legitimate agents' states.
Hence,
we define the deviation term due to malicious agents as
\begin{equation}\label{eq:state-mal-error}
	\statemalerr{i}{t+1} \doteq \left|\ls \statecontrmal{t+1}\rs_i\right|. 
\end{equation}
We have the following result.
\begin{lemma}\label{lem:deviation-mal}
	The deviation from nominal consensus due to malicious agents' contribution can be bounded as
	\begin{equation}\label{eq:state-mal-error-bound-prob}
		\pr{\limsup_{t\rightarrow\infty}\statemalerr{i}{t} > \epsilon} \le \eta\umal{\epsilon}, \qquad \forall i\in\leg
	\end{equation}
	where
	\begin{equation}
		\umal{\epsilon} \doteq \dfrac{L\min\{\dmax,M\}}{2\epsilon}\xi
	\end{equation}
	and we define 
	\begin{equation}\label{eq:xi}
		\xi \doteq \dfrac{1}{\e^{2\meanmal^2}-1} 
		- \dfrac{c(1+\e^{-\gamma})}{\e^{2\meanmal^2}-\e^{-\gamma}}
		+ \dfrac{c^2\e^{-\gamma}}{\e^{2\meanmal^2}-\e^{-2\gamma}}.
	\end{equation}
\end{lemma}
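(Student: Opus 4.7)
The plan is to apply Markov's inequality to the nonnegative random variable $\statemalerr{i}{t+1}$ and then dominate the expected deviation using (i) the exponentially decaying misclassification probability from \cref{lem:misclassification-probability} and (ii) a careful truncation of the confidence product keeping exactly two factors. Because the almost-sure convergence established in \autoref{sec:convergence} also applies componentwise to $\statecontrmal{t}$ (by the same partition-of-events argument used in the proof of \cref{lem:deviation-leg}, invoking $\pr{\tf<\infty}=1$), with probability one we may replace $\limsup_{t\to\infty}\statemalerr{i}{t}$ by $\lim_{t\to\infty}\statemalerr{i}{t}$, so that Markov gives $\pr{\limsup_{t\to\infty}\statemalerr{i}{t}>\epsilon}\le \frac{1}{\epsilon}\mean{\lim_{t\to\infty}|[\statecontrmal{t+1}]_i|}$.

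Starting from~\eqref{eq:leg-state-contr-mal}, the triangle inequality together with \cref{ass:initial-state-bound} and the fact that $\prod_{s=k+1}^{t+1}\Wleg{s}$ is entrywise sub-stochastic yields
\begin{equation*}
|[\statecontrmal{t+1}]_i| \le \eta\sum_{k=0}^{t+1}\prod_{s=k}^{t+1}(1-\lam{s})\sum_{j\in\leg}\sum_{m\in\neigh{j}\cap\mal}w_{jm}(k).
\end{equation*}
Since $m\in\neigh[k]{j}$ forces $|\neighaug[k]{j}|\ge 2$, the weight $w_{jm}(k)$ is at most $\frac{1}{2}\one[\{\trusthist{j}{m}{k}\ge 0\}]$; taking expectations and applying \cref{lem:misclassification-probability} yields $\mean{w_{jm}(k)}\le\frac{1}{2}\e^{-2\meanmal^2(k+1)}$. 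There are at most $\min\{\dmax,M\}$ terms in the sum over $m$ and $L$ terms in the sum over $j$, where the factor $L$ arises from using the crude pointwise bound $[\prod_{s=k+1}^{t+1}\Wleg{s}]_{ij}\le 1$ rather than the row-stochastic sum; this deliberate looseness bypasses the statistical dependence between the propagation matrix (which involves classifications from time $k+1$ onward) and the weight $w_{jm}(k)$ (which depends on the classification at time $k$). Combining these estimates gives
\begin{equation*}
\mean{|[\statecontrmal{t+1}]_i|}\le \eta\frac{L\min\{\dmax,M\}}{2}\sum_{k=0}^{t+1}\prod_{s=k}^{t+1}(1-\lam{s})\e^{-2\meanmal^2(k+1)}.
\end{equation*}

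To obtain a closed form, I bound the confidence product by its first two factors, $\prod_{s=k}^{t+1}(1-\lam{s})\le(1-\lam{k})(1-\lam{k+1})$, and pass to the limit via dominated convergence (justified by the geometric decay of $\e^{-2\meanmal^2(k+1)}$). Substituting $\lam{k}=c\e^{-\gamma k}$ and expanding $(1-c\e^{-\gamma k})(1-c\e^{-\gamma(k+1)})$ splits the infinite sum into three geometric series with common ratios $\e^{-2\meanmal^2}$, $\e^{-(2\meanmal^2+\gamma)}$, and $\e^{-(2\meanmal^2+2\gamma)}$, respectively; straightforward evaluation shows that these are precisely $\frac{1}{\e^{2\meanmal^2}-1}$, $-\frac{c(1+\e^{-\gamma})}{\e^{2\meanmal^2}-\e^{-\gamma}}$, and $\frac{c^2\e^{-\gamma}}{\e^{2\meanmal^2}-\e^{-2\gamma}}$, summing to $\xi$ in~\eqref{eq:xi}. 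The main subtlety lies in the two-factor truncation: keeping only one factor loses the $c^2$ summand of $\xi$, while retaining three or more factors produces series that are not summable in elementary closed form, so the two-factor choice is the tightest level of detail still compatible with the stated expression for $\xi$.
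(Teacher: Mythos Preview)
Your argument is correct and follows essentially the same route as the paper's proof: a pathwise bound via the triangle inequality, \cref{ass:initial-state-bound}, and sub-stochasticity of the $\Wleg{s}$ products, the bound $w_{jm}(k)\le\tfrac12\one[\{\trusthist{j}{m}{k}\ge0\}]$, the two-factor truncation $(1-\lam{k})(1-\lam{k+1})$, the $L\min\{\dmax,M\}$ count, evaluation of the three geometric series to obtain $\xi$, and Markov's inequality. The only cosmetic differences are that the paper obtains the factor $L$ via $\max_{i\in\leg}\le\sum_{i\in\leg}$ (so your dependence discussion is unnecessary, as both bounds are pathwise with deterministic coefficients) and invokes monotone rather than dominated convergence, exploiting that the bounding sum $\phi(t)$ is nondecreasing in $t$.
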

\begin{proof}
	From~\eqref{eq:leg-state-contr-mal} and~\eqref{eq:state-mal-error},
	it follows
	\journalVersion{
		\begin{equation}\label{eq:state-mal-error-bound}
			\statemalerr{i}{t+1} \le \eta \sum_{k=0}^t (1-\lam{k+1}) (1-\lam{k})\max_{i\in\leg}\ls \Wmal{k}\one\rs_i.
		\end{equation}
	}{\begin{equation}\label{eq:state-mal-error-bound}
			\begin{aligned}
				\statemalerr{i}{t+1} &= \left|\ls \sum_{k=0}^t \lr\prodW{s}{k+1}{t}\rr (1-\lam{k})\Wmal{k}\statemal{k}\rs_i\right|\\
				&\stackrel{(i)}{\le} \sum_{k=0}^t \left|\ls  \lr\prodW{s}{k+1}{t}\rr (1-\lam{k})\Wmal{k}\statemal{k}\rs_i\right|\\
				&\stackrel{(ii)}{\le} \eta \sum_{k=0}^t \ls \lr\prodW{s}{k+1}{t}\rr (1-\lam{k})\Wmal{k}\one\rs_i\\
				&\stackrel{(iii)}{\le} \eta \sum_{k=0}^t (1-\lam{k+1}) (1-\lam{k})\max_{i\in\leg}\ls \Wmal{k}\one\rs_i
			\end{aligned}
		\end{equation}
		where 
		$(i)$ follows from the triangle inequality,
		$(ii)$ from \cref{ass:initial-state-bound},
		and $(iii)$ because $\{\Wleg{t}\}_{t\ge0}$ are sub-stochastic matrices and $\lam{t}$ is a decreasing sequence with $0 < 1-\lam{t} < 1$.}
	The weights given to malicious agents are bounded as
	\begin{equation}\label{eq:weight-mal-bound}
		\ls \Wmal{t}\one\rs_i = \sum_{j=1}^M \ls \Wmal{t}\rs_{ij} \le \sum_{j\in\mal}\dfrac{1}{2}\one[\{\beta_{ij}(t)\ge0\}].
	\end{equation}
	Further,
	\begin{equation}
		\max_{i\in\leg}\ls \Wmal{k}\one\rs_i \le \sum_{i\in\leg} \ls \Wmal{k}\one\rs_i.
	\end{equation}
	Thus,
	\begin{flalign}
		\statemalerr{i}{t+1}\journalVersion{\leq\phi(t)\doteq}{&\leq 
		\eta\sum_{k=0}^t (1-\lam{k+1}) (1-\lam{k}) \sum_{i\in\leg}\ls \Wmal{k}\one\rs_i\nonumber\\
		&\leq}
		\dfrac{\eta}{2} \sum_{k=0}^{t} (1-\lam{k+1}) (1-\lam{k}) \sum_{\substack{i\in\leg\\j\in\mal}}\one[\{\beta_{ij}(t)\ge0\}]\journalVersion{}{\doteq \phi(t)}.
	\end{flalign}
	It follows
	\begin{equation}\label{eq:state-mal-error-mean-bound}
		\begin{aligned}
			\mean{\statemalerr{i}{t+1}}	
				&\le \mean{\phi(t)}\journalVersion{}{\\
				&= \dfrac{\eta}{2} \sum_{k=0}^{t} (1-\lam{k+1}) (1-\lam{k}) \sum_{\substack{i\in\leg\\j\in\mal}}\pr{\beta_{ij}(k)\ge0}\\
				&\le \dfrac{\eta}{2} \sum_{k=0}^{t} (1-\lam{k+1}) (1-\lam{k}) \sum_{\substack{i\in\leg\\j\in\mal\cap\neigh{i}}}\!\!\e^{-2(k+1)\meanmal^2}\\
				&}\le \dfrac{\eta L\min\{\dmax,M\}}{2} z_t\\
		\end{aligned}
	\end{equation}
	where we define 
	\begin{equation}
		z_t \doteq \sum_{k=0}^{t} (1-\lam{k+1}) (1-\lam{k}) \e^{-2(k+1)\meanmal^2}.
	\end{equation}
	Note that $\phi$ is non-decreasing with $t$.
	Hence,
	by the monotone convergence theorem,
	we can exchange the expectation in~\eqref{eq:state-mal-error-mean-bound} with the limit for $t\rightarrow\infty$.
	This yields
	\begin{equation}\label{eq:bound-exp}
		\begin{aligned}			\mean{\limsup_{t\rightarrow\infty}\statemalerr{i}{t}} 	\journalVersion{\le}{&\le\mean{\limsup_{t\rightarrow\infty}\phi(t)}\\
				&\stackrel{(i)}{=}\mean{\lim_{t\rightarrow\infty}\phi(t)}\\
				&\stackrel{(ii)}{=}\lim_{t\rightarrow\infty}\mean{\phi(t)}\\
				&\le\dfrac{\eta L\min\{\dmax,M\}}{2}\lim_{t\rightarrow\infty}z_t\\
				&=}\dfrac{\eta L\min\{\dmax,M\}}{2}\xi.
		\end{aligned}
	\end{equation}
	where $(i)$ is because $\phi(t)$ admits a limit and $(ii)$ follows from the monotone convergence theorem.
	Bound~\eqref{eq:state-mal-error-bound-prob} follows  applying Markov inequality to~\eqref{eq:bound-exp}.
\end{proof}

The bound $\umal{\epsilon}$ in~\eqref{eq:state-mal-error-bound-prob} 
increases with $\gamma$ through the parameter $\xi$.
This is intuitive:
if $\lam{t}$ is larger,
the legitimate agents are less sensitive to their neighbors' states as per~\eqref{eq:update-rule-regular},
thus they are also more resilient against malicious transmissions and the corresponding deviation term $\statemalerr{i}{t}$ is smaller.
Also,
$\umal{\epsilon}$ increases with $\meanmal$, 
suggesting that higher uncertainty in classification of malicious agents (represented by greater $\meanmal$) yields a larger deviation,
on average.

\subsubsection{Bound on Deviation}\label{sec:deviation-total}

The overall bound on the deviation from nominal consensus can be computed by merging the two bounds obtained for legitimate and malicious agents' contributions.
Applying the triangle inequality to~\eqref{eq:deviation},~\eqref{eq:state-leg-error}, and~\eqref{eq:state-mal-error} yields
\begin{equation}\label{eq:state-err-triangle-ineq}
	\stateerr{i}{t} \le \statelegerr{i}{t} + \statemalerr{i}{t}.
\end{equation}
We have the following result that quantifies how distant from the nominal consensus the legitimate agents eventually get.
\begin{thm}[Deviation from nominal consensus]\label{thm:bound}
	The deviation from nominal consensus is upper bounded as
	\begin{equation}\label{eq:deviation-bound-prob}
		\pr{\limsup_{t\rightarrow\infty}\stateerr{i}{t} > \epsilon} \le \eta u(\epsilon), \qquad i\in\leg
	\end{equation}
	with
	\begin{equation}\label{eq:deviation-bound}
		u(\epsilon)\doteq\uleg{\dfrac{\epsilon}{2}} + \umal{\dfrac{\epsilon}{2}}.
	\end{equation}
\end{thm}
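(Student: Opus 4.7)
The plan is to combine \cref{lem:deviation-leg} and \cref{lem:deviation-mal} through a union-bound argument applied to the triangle inequality~\eqref{eq:state-err-triangle-ineq} that has already been established. The core observation is purely set-theoretic: since $\stateerr{i}{t} \le \statelegerr{i}{t} + \statemalerr{i}{t}$, the event $\{\stateerr{i}{t} > \epsilon\}$ forces at least one of the summands to exceed $\epsilon/2$, as otherwise their sum could not exceed $\epsilon$. Hence
\begin{equation*}
    \{\stateerr{i}{t} > \epsilon\} \subseteq \{\statelegerr{i}{t} > \epsilon/2\} \cup \{\statemalerr{i}{t} > \epsilon/2\}.
\end{equation*}

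Next, I would promote this pointwise-in-$t$ inclusion to the asymptotic regime. If $\stateerr{i}{t} > \epsilon$ for infinitely many $t$, then by pigeonhole at least one of $\statelegerr{i}{t} > \epsilon/2$ or $\statemalerr{i}{t} > \epsilon/2$ must also hold for infinitely many $t$, giving
\begin{equation*}
    \{\limsup_{t\to\infty}\stateerr{i}{t} > \epsilon\} \subseteq \{\limsup_{t\to\infty}\statelegerr{i}{t} > \epsilon/2\} \cup \{\limsup_{t\to\infty}\statemalerr{i}{t} > \epsilon/2\}.
\end{equation*}
Applying monotonicity of $\mathbb{P}$ followed by the union bound yields
\begin{equation*}
    \pr{\limsup_{t\to\infty}\stateerr{i}{t} > \epsilon} \le \pr{\limsup_{t\to\infty}\statelegerr{i}{t} > \epsilon/2} + \pr{\limsup_{t\to\infty}\statemalerr{i}{t} > \epsilon/2}.
\end{equation*}

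Finally, I would substitute the two component bounds: \cref{lem:deviation-leg} applied at threshold $\epsilon/2$ gives the first term $\le \eta\,\uleg{\epsilon/2}$, and \cref{lem:deviation-mal} applied at threshold $\epsilon/2$ gives the second term $\le \eta\,\umal{\epsilon/2}$. Summing these and recalling the definition~\eqref{eq:deviation-bound} of $u(\epsilon)$ produces exactly~\eqref{eq:deviation-bound-prob}.

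There is no real obstacle in this argument; it is essentially a bookkeeping step once the heavy lifting of \cref{lem:deviation-leg,lem:deviation-mal} has been done. The only point worth handling cleanly is the passage from the pointwise triangle inequality to the limsup-level inclusion, which is a standard consequence of the fact that $\limsup_t(a_t+b_t) \le \limsup_t a_t + \limsup_t b_t$ (applied together with the a.s. existence of the limits, as justified in the Remark following~\eqref{eq:max-ss-deviation}). The split of $\epsilon$ into two halves of $\epsilon/2$ is the natural choice that keeps both Markov-type bounds symmetric, and could in principle be optimized if a tighter constant were sought.
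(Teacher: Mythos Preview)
Your proposal is correct and follows exactly the approach of the paper's proof, which is the one-line statement that the result follows by applying the union bound to the triangle inequality~\eqref{eq:state-err-triangle-ineq} and then invoking \cref{lem:deviation-leg,lem:deviation-mal}. Your write-up simply makes explicit the set-theoretic inclusion and the $\epsilon/2$ split that the paper leaves implicit.
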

\begin{proof}
	It follows by applying the union bound to~\eqref{eq:state-err-triangle-ineq} and then invoking~\cref{lem:deviation-leg,lem:deviation-mal}.
\end{proof}
We can assess the impact of a specific choice of $\lam{t}$ by observing the overall deviation bound~\eqref{eq:deviation-bound-prob}.
Recall that,
in light of the expression~\eqref{eq:lambda},
larger values of $\gamma$ correspond to faster decay of $\lam{t}$ --- \ie the standard consensus protocol is recovered more quickly.
In view of what remarked for the two bounds $\uleg{\epsilon}$ and $\umal{\epsilon}$,
the bound $u(\epsilon)$ above suggests that the steady-state deviation from nominal consensus decreases for small values of $\gamma$ and increases as $\gamma$ is chosen larger.
The presence of a nonzero point of minimum,
which intuitively corresponds to an optimal design of $\gamma$,
is caused by the input term $\lam{t}\state{0}{i}$ added to the standard consensus in~\eqref{eq:update-rule-regular} to enhance resilience,
and represents a possible loss in performance due to forcing a suboptimal protocol for too long compared to the time needed for correct detection of adversaries.
In particular,
the term $\ell$ appearing in $\uleg{\epsilon}$ (see~\eqref{eq:leg-bound-prob}) suggests that the optimal $\gamma$ decreases as $\tf$ increases,
reflecting the need of legitimate agents to act more cautiously when the uncertainty in the trust variables $\trust{i}{j}{t}$ is higher.
On the other hand,
the term $\umal{\epsilon}$ requires $\gamma$ to be small (slow decay of $\lam{t}$) to annihilate the effect of malicious agents.

\begin{rem}[Nominal scenario]\label{rem:no-malicious-deviation}
	In the case with no malicious agents ($M=0$),
	the deviation term $\statemalerr{i}{t}$ is identically zero and the choice of $\gamma$ affects the deviation only through the term $\statelegerr{i}{t}$ due to misclassifying legitimate agents.
\end{rem}

\section{Numerical Simulations}\label{sec:simulations}

To test the effectiveness of the proposed resilient consensus protocol and the design insight suggested by the bound proposed in~\cref{thm:bound},
we run numerical simulations with a sparse network with $50$ legitimate agents and $10$ malicious agents.
The communication links are modeled via a random geometric graph with communication radius equal to $0.2$,
the agents being spread across the ball $[0,1] \times [0,1] \in \Real{2}$.
The initial states of legitimate agents are randomly drawn from the uniform distribution $\mathcal{U}(0,\eta)$ with $\eta=1$,
while the malicious agents follow an oscillatory trajectory about the mean value $2\statelegtruess$ (twice the nominal consensus value) under additive zero-mean Gaussian noise with standard deviation $0.05$.
Note that,
in the absence of data-driven detection mechanisms (the malicious agents are classified based on the trust information $\trust{i}{j}{t}$ that comes from physical transmissions and not based on the states they transmit),
this behavior is most harmful because it steadily drives the legitimate agents far away from the nominal consensus value.
Also,
it holds $\statelegtruess\in(0,\nicefrac{1}{2})$ and the random oscillations of malicious agents are small compared to their mean value,
which verifies~\cref{ass:initial-state-bound}.

We run the proposed protocol~\eqref{eq:update-rule-regular} with $\lam{t}$ according~\eqref{eq:lambda} with $c=0.9$
for $T=1000$ iterations and average all results across $1000$ Monte Carlo runs.
We report four different setups with different values of $\meanleg$ and $\meanmal$ that respectively increase from $0.55$ to $0.7$ and decrease from $0.45$ to $0.3$.
In all experiments,
the trust observations of legitimate (resp., malicious) transmissions are drawn from the uniform distribution centered at $\meanleg$ (resp., $\meanmal$) with length equal to twice the minimum between $1-\meanleg$ and $\meanmal$.

\begin{figure}
	\centering
	\includegraphics[width=\linewidth]{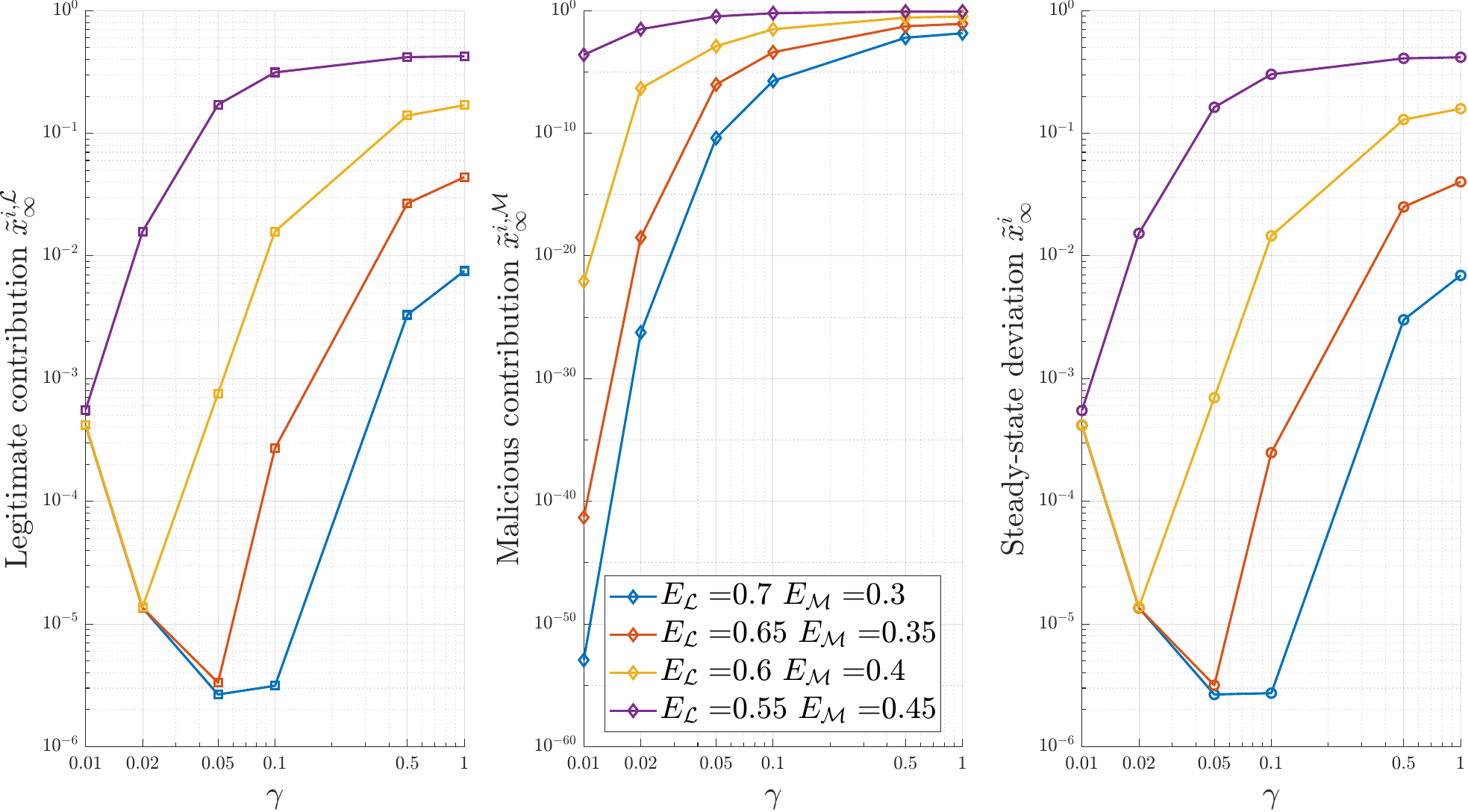}
	\caption{Steady-state deviation from nominal consensus value (right)
		and contributions due to legitimate agents $\statelegerr{i}{\infty}$ (left)
		and to malicious agents $\statemalerr{i}{\infty}$ (middle)
		averaged over 1000 Monte Carlo runs.
		As predicted by the bound~\eqref{eq:leg-bound-prob},
		the deviation term due to misclassification of legitimate agents is minimized by a (small) positive value of $\gamma$ that decreases as the trust scores get more uncertain,
		while the deviation term due to malicious agents steadily increases as $\gamma$ grows,
		according to bound~\eqref{eq:state-mal-error-bound-prob}. 
	}
	\label{fig:deviation}
\end{figure}

The outcomes are depicted in~\autoref{fig:deviation} that shows overall steady-state deviation $\stateerr{i}{\infty}$~\eqref{eq:deviation} in the right box
together with maximal deviation due to legitimate agents $\statelegerr{i}{\infty}$~\eqref{eq:state-leg-error} in the left box
and maximal deviation due to malicious agents $\statemalerr{i}{\infty}$~\eqref{eq:state-mal-error} in the middle box.
It can be seen that the simulated behavior agrees with the analytical bound in~\cref{thm:bound}:
indeed,
the deviation term due to malicious agents steadily increases as $\gamma$ grows,
while the deviation associated with misclassification of legitimate neighbors is minimized by a nonzero value of $\gamma$
that decreases with the uncertainty of trust variables.
For example,
when $\meanleg=0.7$ and $\meanmal=0.3$,
the trust scores are very informative and the deviation is minimized at $\gamma=0.05$,
which dictates a relatively fast decay of the parameter $\lam{t}$.
Conversely,
in the case $\meanleg=0.55$ and $\meanmal=0.45$,
the trust variables are more uncertain and the optimal choice is given by $\gamma=0.01$,
corresponding to a much slower decay of $\lam{t}$.
Moreover,
as the uncertainty in the trust variables increase,
it is more difficult for legitimate agents to correctly classify their neighbors,
which leads to the monotonic increase observed across all deviation terms for every choice of $\gamma$.

\section{Conclusions}\label{sec:conclusions}

We propose a resilient consensus protocol that uses trustworthiness information derived from the physical transmission channel to progressively detect malicious agents,
and complements this information with a time-varying scaling that accounts for how confident the agent is about its neighbors being malicious or not.
Analytical results demonstrate that the proposed protocol leads to a consensus almost surely.
Also,
the asymptotic deviation is upper bounded by a non-monotonic function of the decay rate of the confidence parameter.
Numerical results corroborate these findings,
suggesting that the confidence parameter can be optimally tuned so as to minimize the steady-state deviation.
	

\section*{Acknowledgments}

We thank Prof. Stephanie Gil for the fruitful discussions about incorporating confidence into the trust-based resilient consensus protocol of~\cite{Yemini22tro-resilienceConsensusTrust} and on design considerations for the confidence parameter $\lam{t}$. M.~Yemini additionally thanks Prof. Reuven Cohen for an enriching discussion regarding the finer points of the convergence of random variables and conditional expectations. 
	
	\bibliographystyle{IEEEtran}
	 \journalVersion{
	 	\bibliography{IEEEabrv,
	 		bibfile}

\begin{thebibliography}{10}
\providecommand{\url}[1]{#1}
\csname url@samestyle\endcsname
\providecommand{\newblock}{\relax}
\providecommand{\bibinfo}[2]{#2}
\providecommand{\BIBentrySTDinterwordspacing}{\spaceskip=0pt\relax}
\providecommand{\BIBentryALTinterwordstretchfactor}{4}
\providecommand{\BIBentryALTinterwordspacing}{\spaceskip=\fontdimen2\font plus
\BIBentryALTinterwordstretchfactor\fontdimen3\font minus
  \fontdimen4\font\relax}
\providecommand{\BIBforeignlanguage}[2]{{%
\expandafter\ifx\csname l@#1\endcsname\relax
\typeout{** WARNING: IEEEtran.bst: No hyphenation pattern has been}%
\typeout{** loaded for the language `#1'. Using the pattern for}%
\typeout{** the default language instead.}%
\else
\language=\csname l@#1\endcsname
\fi
#2}}
\providecommand{\BIBdecl}{\relax}
\BIBdecl

\bibitem{acc}
L.~Ballotta and M.~Yemini, ``The role of confidence for trust-based resilient
  consensus,'' in \emph{American Control Conference}, 2024.

\bibitem{6481629}
H.~J. LeBlanc, H.~Zhang, X.~Koutsoukos, and S.~Sundaram, ``Resilient asymptotic
  consensus in robust networks,'' \emph{IEEE J. Sel. Areas Commun.}, vol.~31,
  no.~4, pp. 766--781, 2013.

\bibitem{Usevitch20tac-resilientLeaderFollower}
J.~Usevitch and D.~Panagou, ``Resilient {{Leader-Follower Consensus}} to
  {{Arbitrary Reference Values}} in {{Time-Varying Graphs}},'' \emph{IEEE
  Trans. Automat. Contr.}, vol.~65, no.~4, pp. 1755--1762, Apr. 2020.

\bibitem{Shang20automatica-resilientConsensus}
Y.~Shang, ``Resilient consensus in multi-agent systems with state
  constraints,'' \emph{Automatica}, vol. 122, p. 109288, Dec. 2020.

\bibitem{Baras19med-trust}
J.~S. Baras and X.~Liu, ``Trust is the {{Cure}} to {{Distributed Consensus}}
  with {{Adversaries}},'' in \emph{Proc. Mediterr. Conf. Control Autom.}, Jul.
  2019, pp. 195--202.

\bibitem{Abbas18tcns-trustedNodes}
W.~Abbas, A.~Laszka, and X.~Koutsoukos, ``Improving {{Network Connectivity}}
  and {{Robustness Using Trusted Nodes With Application}} to {{Resilient
  Consensus}},'' \emph{IEEE Trans. Control Netw. Syst.}, vol.~5, no.~4, pp.
  2036--2048, Dec. 2018.

\bibitem{7447011}
S.~Sundaram and B.~Gharesifard, ``Consensus-based distributed optimization with
  malicious nodes,'' in \emph{Proc. Annu. Allerton Conf. Commun. Control
  Comput.}, Sep. 2015, pp. 244--249.

\bibitem{9993257}
Y.~Yi, Y.~Wang, X.~He, S.~Patterson, and K.~H. Johansson, ``A sample-based
  algorithm for approximately testing r-robustness of a digraph,'' in
  \emph{Proc. IEEE Conf. Decis. Control}, 2022, pp. 6478--6483.

\bibitem{ICRA2019_switching}
T.~Wheeler, E.~Bharathi, and S.~Gil, ``Switching topology for resilient
  consensus using wi-fi signals,'' in \emph{Proc. Int. Conf. Robot. Autom.},
  2019, pp. 2018--2024.

\bibitem{Yemini22tro-resilienceConsensusTrust}
M.~Yemini, A.~Nedi{\'c}, A.~J. Goldsmith, and S.~Gil, ``Characterizing
  {{Trust}} and {{Resilience}} in {{Distributed Consensus}} for {{Cyberphysical
  Systems}},'' \emph{IEEE Trans. Robot.}, vol.~38, no.~1, pp. 71--91, Feb.
  2022.

\bibitem{9481121}
F.~Mallmann-Trenn, M.~Cavorsi, and S.~Gil, ``Crowd vetting: Rejecting
  adversaries via collaboration with application to multirobot flocking,''
  \emph{IEEE Trans. Robot.}, vol.~38, no.~1, pp. 5--24, 2022.

\bibitem{CDC_dist_optimization}
M.~Yemini, A.~Nedi\'{c}, S.~Gil, and A.~Goldsmith, ``Resilience to malicious
  activity in distributed optimization for cyberphysical systems,'' in
  \emph{Proc. IEEE Conf. Decis. Control}, 2022.

\bibitem{journal_distributed_optimization}
M.~Yemini, A.~Nedi\'{c}, A.~Goldsmith, and S.~Gil, ``Resilient distributed
  optimization for multi-agent cyberphysical systems,'' \emph{arXiv preprints},
  p. arXiv:2212.02459, 2022.

\bibitem{hadjicostis2022trustworthy}
C.~N. Hadjicostis and A.~D. Dom{\'\i}nguez-Garc{\'\i}a, ``Trustworthy
  distributed average consensus,'' in \emph{Proc. IEEE Conf. Decis.
  Control}.\hskip 1em plus 0.5em minus 0.4em\relax IEEE, 2022, pp. 7403--7408.

\bibitem{gil2017guaranteeing}
S.~Gil, S.~Kumar, M.~Mazumder, D.~Katabi, and D.~Rus, ``Guaranteeing
  spoof-resilient multi-robot networks,'' \emph{Auton. Robots}, vol.~41, pp.
  1383--1400, 2017.

\bibitem{8758381}
A.~Tsiamis, K.~Gatsis, and G.~J. Pappas, ``State-secrecy codes for networked
  linear systems,'' \emph{IEEE Trans. Autom. Control}, vol.~65, no.~5, pp.
  2001--2015, 2020.

\bibitem{gil2023physicality}
S.~Gil, M.~Yemini, A.~Chorti, A.~Nedić, H.~V. Poor, and A.~J. Goldsmith, ``How
  physicality enables trust: A new era of trust-centered cyberphysical
  systems,'' \emph{arXiv:2311.07492}, 2023.

\bibitem{ballottaExtended}
L.~Ballotta, G.~Como, J.~S. Shamma, and L.~Schenato, ``Can competition
  outperform collaboration? {T}he role of misbehaving agents,'' \emph{IEEE
  Trans. Autom. Control}, pp. 1--16, 2023.

\bibitem{AURO}
S.~Gil, S.~Kumar, M.~Mazumder, D.~Katabi, and D.~Rus, ``Guaranteeing
  spoof-resilient multi-robot networks,'' \emph{Autonomous Robots}, vol.~41,
  pp. 1383--1400, 2017.

\bibitem{ballottaCDC22compColl}
L.~Ballotta, G.~Como, J.~S. Shamma, and L.~Schenato, ``Competition-based
  resilience in distributed quadratic optimization,'' in \emph{Proc. IEEE Conf.
  Decis. Control}, 2022.

\bibitem{FJdynamics}
N.~E. Friedkin and E.~C. Johnsen, ``Social influence and opinions,'' \emph{J.
  Math. Sociol.}, vol.~15, no. 3-4, pp. 193--206, 1990.

\bibitem{infProductConvergence}
W.~F. Trench, ``Conditional convergence of infinite products,'' \emph{The
  American Mathematical Monthly}, vol. 106, no.~7, pp. 646--651, 1999.

\end{thebibliography}
	 	}{

	 }
		
\end{document}